\newtheorem{theorem}{Theorem}[section]
\newtheorem{definition}{Definition}[section]
\newtheorem{remark}{Remark}[section]
\crefname{theorem}{Thm.}{Thms.}
\Crefname{theorem}{Theorem}{Theorems}
\crefname{equation}{Eq.}{Eqs.}
\Crefname{equation}{Equation}{Equations}
\newcommand{\C}{\mathbb{C}}
\newcommand{\eps}{\varepsilon}
\newcommand{\Linear}{\mathcal{L}}
\newcommand{\U}{\mathcal{U}}
\renewcommand{\P}{\mathcal{P}}
\renewcommand{\H}{\mathcal{H}}
\newcommand{\eqdef}{\vcentcolon=}
\newcommand{\sbline}{\\[.3\normalbaselineskip]} 
\crefname{algorithm}{Protocol}{Protocols} 
\begin{document}

\title{Port-Based State Preparation and Applications}

\author[1]{Garazi Muguruza}
\email{g.muguruzalasa@uva.nl}
\orcid{0009-0006-4627-9520}
\author[1]{Florian Speelman}
\email{f.speelman@uva.nl}
\orcid{0000-0003-3792-9908}
\affiliation[1]{QuSoft \& Informatics Institute, University of Amsterdam,
Netherlands}

\maketitle

\begin{abstract}
    We introduce Port-Based State Preparation (PBSP), a teleportation task where Alice holds a complete classical description of the target state and Bob's correction operations are restricted to only tracing out registers. We show a protocol that implements PBSP with error decreasing exponentially in the number of ports, in contrast to the polynomial trade-off for the related task of Port-Based Teleportation, and we prove that this is optimal when a maximally entangled resource state is used.

    As an application, we introduce approximate Universal Programmable Hybrid Processors (UPHP). Here the goal is to encode a unitary as a quantum state, and the UPHP can apply this unitary to a quantum state when knowing its classical description. 
    We give a construction that needs strictly less memory in terms of dimension than the optimal approximate Universal Programmable Quantum Processor achieving the same error. Additionally, we provide lower bounds for the optimal trade-off between memory and error of UPHPs.
\end{abstract}

\section{Introduction}
Quantum teleportation~\cite{bennett_teleporting_1993} is a fundamental task in quantum information processing: Two parties, Alice and Bob, can implement a quantum channel by means of a classical channel and one entangled quantum bit shared between the two parties. Although an infinitely-precise description of a quantum state would require an infinite number of classical bits to describe, quantum teleportation allows an observer who cannot access the infinite information in a quantum state to send this information only with 2-bits of classical communication and an entangled qubit.

Standard teleportation is a protocol where the transmitted quantum state is unknown, but what if Alice knows which quantum state she would like to transmit? Of course, she could first create the state and then use a standard teleportation protocol, but it might open up the option for other protocols. This variant of teleportation is called remote state preparation (RSP) \cite{bennett_remote_2001}, and can indeed be performed more efficiently; for instance, asymptotically only one bit of classical communication is necessary to transmit a qubit.

Port-based teleportation~\cite{ishizaka_asymptotic_2008} (PBT) is a variant of teleportation where the receiver's correction operation is very simple; after receiving classical information from Alice, Bob simply picks one of the subsystems, or \emph{ports}, and traces out the rest of his part of the entangled resource state. This form of the correction operation allows Bob to perform quantum operations on the output of the protocol before receiving the correct port -- this property makes PBT a useful primitive for various tasks in quantum information, including proving bounds on channel discrimination~\cite{pirandola_fundamental_2019}, universal programmable quantum processors \cite{nielsen_programmable_1997,yang_optimal_2020}, constructing protocols for instantaneous non-local quantum computation~\cite{beigi_simplified_2011}, transposing and inverting unitary operations~\cite{quintino_probabilistic_2019,quintino_reversing_2019}, storage and retrieval of unitary channels~\cite{sedlak_optimal_2019} and communication complexity and Bell nonlocality~\cite{buhrman_quantum_2016}. 

The primitive PBT is powerful but not very efficient, requiring a number of ports scaling polynomially with the dimension of the transmitted state (i.e., exponential in the number of qubits transmitted) and the tolerated error. Given that knowing the teleported quantum state makes RSP more efficient than standard teleportation, we will study the following question:\\
\vspace{0.1cm}\\
\makebox[\textwidth]{\emph{What is the complexity of port-based teleportation of a known state?}}\\
\makebox[\textwidth]{\emph{And what would be the uses of such primitive?}}\\
\vspace{0.1cm}

\paragraph{Our contributions.}
In this work, we introduce and study a variant of quantum teleportation, Port-Based State Preparation (PBSP); where the target qubit is known to Alice, but is still unknown to Bob, and Bob cannot perform any correction operation other than discarding ports. We can consider probabilistic and deterministic variants of PBSP, where the former either constructs the desired state perfectly or signals a failure, and the latter always constructs a (necessarily approximate) state. A protocol for this task is present in the RSP literature, denoted as the `column method'~\cite{bennett_remote_2005}, but as far as we are aware we are the first ones describing the general task explicitly and studying the relations between the dimension of the entangled resource state, the number of ports and the error of the outcome state.

Since Alice knows the state she intends to send, her measurement can depend on it, which turns out to be enough for the error to scale inverse-exponentially in the number of ports, both for the deterministic and the probabilistic PBSP. Moreover, given a $d$ dimensional input state, when $N$ independent maximally entangled states are taken as resource states, we design measurements in~\Cref{sec:pbsp} that achieve the same success probability $p$ and {worst-case fidelity $F_\mathrm{wc}$},
\begin{equation}
    p = {F_\mathrm{wc}} = 1-\left(1-\frac{1}{d}\right)^N,
\end{equation} 
for the probabilistic and deterministic PBSP respectively. Moreover, in~\Cref{sec:optimality} we show that this is the optimal achievable error for both the probabilistic and deterministic PBSP when maximally entangled states are used as resource. 

Motivated by the multiple applications of PBT, we also introduce and study a known-state variant of a Universal Programmable Processor~\cite{nielsen_programmable_1997}, which we call Universal Programmable Hybrid Processor (UPHP); a quantum machine that given a description of a quantum state, outputs a state approximately close to the state with a chosen unitary applied to it. The name is meant to evoke a combination between classical and quantum universal processors; classical computers can compute any function on input data by programming a universal gate array, and similarly we can program quantum gate arrays to perform arbitrary unitary operations on quantum data. 

Since there are no correction operations, the PBSP protocol allows the receiver to apply any quantum operation in their registers before receiving the classical message from the sender. We show in~\Cref{thm:construction_uphp} that the memory dimension~$m$ needed for the UPHP processor build from PBSP scales in terms of relative error~$\eps$ as:
\begin{equation}
    {m\leq\left(\frac{1}{\eps}\right)^{4d\ln(d)}.}
\end{equation}
Once again, this shows an exponential separation between the classical-quantum hybrid and fully-quantum universal processors. Finally, in~\Cref{thm:optimal_uphp} we bound the optimal achievable memory dimension for UPHPs by
\begin{equation}
    m\geq 2^{\frac{(1-h(\eps))}{2}d},
\end{equation}
which follows from the impossibility of Quantum Random Access Codes (QRACs) coming from Nayak’s bound, see~\Cref{thm:nayak}. 

\begin{table}
	\begin{center}
		\begin{tabular}{ c|c c|c c| }
			\cline{2-5}
			& \multicolumn{2}{c|}{Existence} & \multicolumn{2}{c|}{Optimality} \\ \cline{2-5}
            & \multicolumn{4}{c|}{PBT} \\ \cline{2-5}
			Det. & ${f\geq 1-\frac{d(d-1)}{N}}$ &\cite{ishizaka_asymptotic_2008} & ${f^*\leq1-\frac{d(d-1)}{8N^2}+O\left(\frac{1}{N^3}\right)}$ &\cite{christandl_asymptotic_2021} \\
			&& & ${f^*}\leq K\frac{\log(d)}{d}\left(2N+\frac{2}{3}\right)$ &\cite{kubicki_resource_2019} \\
			Prob. & $p=\frac{N}{N-1+d^2}$ &\cite{studzinski_port-based_2017} & $p^*\leq \frac{N}{N-1+d^2}$ &\cite{pitalua-garcia_deduction_2013,mozrzymas_optimal_2018}\\ \cline{2-5}
            & \multicolumn{4}{c|}{PBSP} \\ \cline{2-5}
            Det. & {$F_\mathrm{wc}=1-\left(1-\frac{1}{d}\right)^N$} & \Cref{thm:fid_existence_pbsp} & {$1-h(\sqrt{1-F_\mathrm{wc}^*})\leq\frac{\log(d)}{d}4N$} & \Cref{thm:optimal_fidelity_pbsp} \\
            & & & {$F^*_\mathrm{wc,EPR}\leq1-\left(1-\frac{1}{d}\right)^N$} & \Cref{thm:optimal_fidelity_pbsp_2} \\
            Prob. & {$p=1-\left(1-\frac{1}{d}\right)^N$} & \Cref{thm:prob_existence_pbsp} & {$p^*_\mathrm{EPR}\leq1-\left(1-\frac{1}{d}\right)^N$} & \Cref{thm:optimal_prob_pbsp} \\ \cline{2-5}
		\end{tabular}
		\caption{Probability $p$, average fidelity $f$ (see~\Cref{def:ac_fid}) and worst-case fidelity (see~\Cref{def:wc_fid}) of teleporting a qudit with~$N$ ports for PBT and PBSP.}\label{fig:comparison_pbt_pbsp}
	\end{center}
\end{table}

\paragraph{Relation to the literature.}
Our results show a fundamental difference between having an unknown versus known state on Alice’s side. The PBSP protocol achieves an exponential scaling of the error in terms of the number of ports, in contrast to the PBT results, where linear upper bounds exist for optimal scaling for any resource state, see~\Cref{fig:comparison_pbt_pbsp}. For deterministic PBT it is known that the Pretty Good Measurement (PGM) achieves the optimal fidelity both for maximally entangled resource states and when the resource states are optimized~\cite{leditzky_optimality_2022}, but this is not the case for probabilistic PBT~\cite{pitalua-garcia_deduction_2013,mozrzymas_optimal_2018}. Although probabilistic and deterministic PBT behave differently not only in terms of achievable error but also the measurements involved, {for both PBSP tasks we obtain exponentially decreasing error with maximally mixed resource states and practically the same measurement}. We include a summary of the comparison between PBT and PBSP results in~\Cref{fig:comparison_pbt_pbsp}, note that the optimality for PBSP is only discussed for maximally entangled resources, which we denote by an $EPR$ subscript.

Along with the definition of UPQPs Nielsen and Chuang proved that they are impossible to perform perfectly~\cite{nielsen_programmable_1997}, and many have afterwards worked on closing the gap between existence and impossibility results from different directions; e.g.\ studying the possible overlap between two program states~\cite{hillery_approximate_2006}, through epsilon-nets~\cite{majenz_entropy_2018}, or with type constants of Banach spaces~\cite{kubicki_resource_2019}. Finally, a recent result by Yang, Renner and Chiribella~\cite{yang_optimal_2020} closed the gap with a tight bound on the memory dimension, which scales $d^2$-inverse-exponentially with the error, where $d$ is the dimension of the unitary operation. In this regard, the existence result of UPHP we provide scales $d\log(d)$-inverse-exponentially with the error, implying that UPHPs are an exponentially easier task to perform than UPQPs. Moreover, our lower bound obtained using the well known Nayak's bound for QRACs also works for UPQPs and although weaker than the optimal one, it improves on the previous best known one by Kubicki, Palazuelos and Perez-García~\cite{kubicki_resource_2019}. We include a summary of the comparison between UPQPs and UPHPs results in~\Cref{fig:comparison_upqp_uphp}.

\begin{table}
	\begin{center}
		\begin{tabular}{ c|c c|c c| }
			\cline{2-5}
			& \multicolumn{2}{c|}{Upper bounds} & \multicolumn{2}{c|}{Lower bounds} \\ \hline
			UPQP & $\left(\frac{1}{\eps}\right)^{d^2}$ & \cite{kubicki_resource_2019} & $2^{\frac{(1-\eps)}{12}d-\frac{2}{3}\log(d)}$ & \cite{kubicki_resource_2019}\\
            & & & $\frac{(1-\eps)}{4}d$ {if unitary processor}&  \\
			& $\left(\frac{1}{\eps}\right)^{\frac{d^2-1}{2}}$ & \cite{yang_optimal_2020} & $\left(\frac{1}{\eps}\right)^{\frac{d^2-1}{2}-\alpha}$, $\forall \alpha>0$ & \cite{yang_optimal_2020} \\ \hline
            UPHP & {$\left(\frac{1}{\eps}\right)^{4d\ln(d)}$} & \Cref{thm:construction_uphp} & {$2^{\frac{d}{2}(1-h(2\eps))}$} & \Cref{thm:optimal_uphp} \\ \hline
		\end{tabular}
		\caption{Memory dimension~$m$ necessary for universally programming any unitary in $d$ dimensions, the set $\U(d)$, with UPQPs and UPHPs up to error $\eps$.}\label{fig:comparison_upqp_uphp}
	\end{center}
\end{table}

\paragraph{Open questions.}
Since Port-Based Teleportation has attracted so much attention and found so many applications, it is natural to ask which of these are relevant with the knowledge of the state {on} Alice's side, and if they become easier to perform. We include some questions explicitly:
\begin{enumerate}
    \item Exact formulas for the fully optimized measurement and resource exist for both probabilistic and deterministic PBT derived using graphical algebra~\cite{studzinski_port-based_2017,mozrzymas_optimal_2018}, and asymptotic term expressions derived from these are discussed in~\cite{christandl_asymptotic_2021,pitalua-garcia_deduction_2013,mozrzymas_optimal_2018}. Here we only prove optimal achievable error of PBSP when maximally entangled states are used as a resource, but can such an analysis for the fully-optimized case be performed for PBSP?
    \item Recently some efficient algorithms for computing PBT were presented which exploit the symmetry of the task~\cite{grinko_efficient_2023,fei_efficient_2023}. Can we also exploit the symmetry of PBSP to provide efficient algorithms for its implementation?
    \item Note that do not close the gap between the upper and lower bound in the memory dimension for UPHP. In fact, even though we produce scaling in the dimension which is close to tight, we were unable to reproduce the known lower-bound results for UPQP which show that the dimension blows up as the error $\eps$ tends to zero, which should intuitively hold for UPHP as well. 
    \item From the perspective of high-order physics, it turns out that for probabilistic exact unitary transformations, \cite{quintino_probabilistic_2019} showed that the maximal success probability can always be obtained by probabilistic PBT, whilst this is not the case for the deterministic case. What happens in the case of a known state?
\end{enumerate}

\section{Preliminaries}

\subsection{Notation}
We use $\mathcal{D}(\H_d)$ to denote {states in a $d$-level system}, called `qudits', i.e.\ trace-one positive semidefinite matrices in a $d$-dimensional Hilbert spaces
\begin{equation}
    \mathcal{D}(\H_d)\eqdef\{\rho\in PSD(\H)\colon \Tr[\rho]=1\}.
\end{equation}
The single-use distinguishability between two quantum channels is often quantified by the diamond norm, given $\Phi:\Linear(\H_A)\to\Linear(\H_B)$,
\begin{equation}
    \|\Phi\|_\diamond\eqdef \sup_{\substack{k\geq1\\\rho\in\mathcal{D}(\mathbb{C}^k\otimes\H_A)}}\|(I_{\mathbb{C}^k}\otimes \Phi_{A\to B})\rho\|_1.
\end{equation}
As is standard in computer science, we will use $\log$ to denote the base $2$ logarithm. The binary entropy function is denoted by $h$, i.e.\ $h(p)\eqdef -p\log(p)-(1-p)\log(1-p)$. We call `maximally entangled' qudits to
\begin{equation}
    \ket{\phi^+_d} \eqdef \frac{1}{\sqrt{d}}\sum_{x=1}^d \ket{x}\otimes\ket{x}.
\end{equation}
We denote the set of integers from $1$ to $n\in\mathbb{N}$ by $[n]\eqdef\{1,2,\ldots,n\}$. Given $A\eqdef A_1\ldots A_n$, and $x\in[N]$, we denote by $A_{\bar{x}}$ all the elements in $A$ other than the $x$-th, this is, $A_{\bar{x}}\eqdef A_1,\ldots,A_{x-1},A_{x+1},\ldots, A_N$.

While we denote a pure quantum state by $\ket{\psi}$, we use $\psi$ to denote its classical description.
Because our results are of information-theoretic nature, we will assume the classical description to be of arbitrarily-high precision representation of $\ket{\psi}$ unless specified otherwise.

We denote by $F(\rho,\sigma)$ the fidelity between two states 
\begin{equation}
    F(\rho,\sigma)=\norm{\sqrt{\rho}\sqrt{\sigma}}^2_1=\Tr[\sqrt{\sqrt{\rho}\sigma\sqrt{\rho}}]^2.
\end{equation}
If one of the states is pure, e.g.\ $\rho=\ketbra{\psi}$, then the fidelity is the overlap of the states
\begin{equation}
    F(\rho,\sigma)=\Tr[\sqrt{\ketbra{\psi}\sigma\ketbra{\psi}}]^2=\bra{\psi}\sigma\ket{\psi}.
\end{equation}
Moreover, by the Fuchs-van de Graaf inequalities we have the following relation between the fidelity and distance between two states,
\begin{equation}\label{eq:fvdg}
    1-\sqrt{F(\rho,\sigma)}\leq\frac{1}{2}\norm{\rho-\sigma}_1\leq\sqrt{1-F(\rho,\sigma)}.
\end{equation}

\begin{definition}[Entanglement fidelity]\label{def:entanglement_fid}
    Given two channels $\Phi^1_A,\Phi^2_A\in CPTP(\H_A)$, we define their \emph{entanglement fidelity} as
    \begin{equation}
        F(\Phi^1_A,\Phi^2_A)\eqdef F((\Phi^1_A\otimes I_B)\ketbra{\phi^+_{AB}},(\Phi^2_A\otimes I_B)\ketbra{\phi^+_{AB}}).
    \end{equation}
\end{definition}
\begin{definition}[Worst-case fidelity]\label{def:wc_fid}
    Given two channels $\Phi^1_A,\Phi^2_A\in CPTP(\H_A)$, we define their \emph{worst-case fidelity} as
    \begin{equation}
        F_\mathrm{wc}(\Phi^1_A,\Phi^2_A)\eqdef \inf_{\ket{\psi}}F((\Phi^1_A\otimes I_B)\ketbra{\psi_{AB}},(\Phi^2_A\otimes I_B)\ketbra{\psi_{AB}}).
    \end{equation}
\end{definition}
We denote by $F^*(\Phi,\mathcal{I})$ (respectively $F_\mathrm{wc}^*(\Phi,\mathcal{I})$) the optimal achievable entanglement (respectively worst-case) fidelity between a channel $\Phi$ and the identity channel $\mathcal{I}$, we will call this the entanglement (respectively worst-case) fidelity of the channel $\Phi$ and denote it by $F^*$ (respectively $F_\mathrm{wc}^*$) if the protocol we are talking about is clear from the context.

\begin{definition}[Average fidelity]\label{def:ac_fid}
    Given a channel $\Phi\in CPTP(\H_A)$, we define its \emph{average fidelity} as
    \begin{equation}
        f(\Phi)\eqdef \int_\psi\bra{\psi}\Phi(\ketbra{\psi})\ket{\psi}\mathrm{d}\psi.
    \end{equation}
\end{definition}
We denote by $f^*(\Phi)$ the optimal achievable average of the channel $\Phi$, and denote it by $f^*$ if the protocol we are talking about is clear from the context. There is a well-known relation between average fidelity and entanglement fidelity of any channel, see~\cite[Proposition 1]{horodecki_general_1999}:
\begin{equation}
    f(\Phi)=\frac{F(\Phi,\mathcal{I})d+1}{d+1}.
\end{equation}

\subsection{Quantum Random Access Codes}

A fundamental impossibility result of quantum information theory is Holevo's theorem, i.e., no more than $n$ bits of \emph{accessible} information can be transmitted by transferring $n$ qubits. Intuitively, Holevo's bound comes from the necessity of choosing a measurement, which will possibly destroy the information that could have been obtained by another measurement. Quantum Random Access Codes (QRACs) were introduced by Ambainis, Nayak, Ta-Shma, and Vazirani~\cite{ambainis_dense_1999} with the hope of circumventing Holevo's theorem. The idea was to hand some measurements to the receiver -- as many measurements as bits -- such that each measurement would reveal at least one of the bits with a certain fixed probability of success. Note that the existence of QRACs would not immediately violate Holevo's bound: as the measurements do not commute, they cannot be applied subsequently with the same success probability.

\begin{definition}[QRAC] A $(n,m,p)$-quantum random access code (QRAC) is a mapping of $n$ classical bits into $m$ qubits $(f:\{0,1\}^n\to \C^{2^m},\: x\mapsto \rho_x)$, along with a set of measurements $(M_1^0,M_1^1),\ldots,(M_n^0,M_n^1)$ satisfying, for all $x\in\{0,1\}^n$ and $i\in[n]$,
	\[ \Tr(M_i^{x_i}f(x))\geq p. \]
\end{definition}

Unfortunately, QRACs do not improve much on Holevo's bound, as shown in~\cite{nayak_optimal_1999}. In particular, although an $n$-qubit quantum state is a vector in a $2^n$-dimensional complex space, it -- perhaps surprisingly -- cannot encode an amount of information that is more than linear in $n$; or at least not in an accessible way.
\begin{theorem}[Nayak's bound]\label{thm:nayak}
	Any $(n,m,p)$-QRAC must satisfy
	\[ m\geq n(1-h(p)), \]
    where $h$ is the binary entropy function.
\end{theorem}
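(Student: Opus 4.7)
The plan is to prove Nayak's bound by combining Fano's inequality, the data-processing inequality, and Holevo's bound through a chain-rule argument on (classical-quantum) mutual information. Let $X = X_1 \cdots X_n$ be a uniformly random $n$-bit string and let $Q$ denote the register holding $f(X) = \rho_X \in \D(\C^{2^m})$. For each index $i \in [n]$, let $Y_i$ be the classical outcome of applying the POVM $(M_i^0, M_i^1)$ to $\rho_X$, and note that the worst-case QRAC guarantee $\Tr(M_i^{x_i} f(x)) \geq p$ implies the average-case bound $\Pr[Y_i = X_i] \geq p$, which is all I need.

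First, I would apply Fano's inequality bitwise. Since $X_i$ is a single bit with a predictor $Y_i$ of error probability at most $1 - p$, Fano gives $H(X_i \mid Y_i) \leq h(1-p) = h(p)$, and therefore
\begin{equation}
I(X_i; Y_i) = H(X_i) - H(X_i \mid Y_i) \geq 1 - h(p).
\end{equation}
By the data-processing inequality for the quantum-to-classical channel implementing the POVM $(M_i^0, M_i^1)$, $I(X_i; Q) \geq I(X_i; Y_i) \geq 1 - h(p)$. Second, I would accumulate these bounds using the chain rule $I(X; Q) = \sum_{i=1}^n I(X_i; Q \mid X_{<i})$. Because the $X_i$ are independent, one checks (via the identity $I(X_i; Q \mid X_{<i}) - I(X_i; Q) = H(X_i \mid Q) - H(X_i \mid X_{<i}, Q) \geq 0$) that conditioning on $X_{<i}$ can only increase the mutual information, so
\begin{equation}
I(X; Q) \geq \sum_{i=1}^n I(X_i; Q) \geq n(1 - h(p)).
\end{equation}
Third, Holevo's theorem applied to the ensemble $\{2^{-n}, \rho_x\}_x$ gives $I(X; Q) \leq S(Q) \leq m$, since $Q$ lives on $m$ qubits. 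Chaining the three inequalities yields $m \geq n(1-h(p))$.

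The only genuinely subtle point is the second step: verifying that classical independence of the $X_i$ suffices to obtain the superadditivity $I(X; Q) \geq \sum_i I(X_i; Q)$ for a classical-quantum state. This is a one-line calculation from the chain rule and the non-negativity of conditional mutual information, but it is the place where the uniform prior is genuinely used — the bound is false in general if the $X_i$ are correlated. Everything else is a direct invocation of standard information-theoretic tools (Fano, data processing, Holevo) and contains no calculation worth grinding through.
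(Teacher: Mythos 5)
The paper does not prove this statement at all: it is imported verbatim from Nayak's work \cite{nayak_optimal_1999} and used as a black box, so there is no internal proof to compare against. Your argument is correct and is the standard information-theoretic derivation of the bound: Fano bitwise, data processing from the quantum register $Q$ to the measurement outcome $Y_i$, superadditivity $I(X;Q)\geq\sum_i I(X_i;Q)$ via the chain rule and independence of the $X_i$, and finally $I(X;Q)\leq S(Q)\leq m$ (Holevo). Your identification of the superadditivity step as the place where uniformity (independence) of the bits is used is exactly right, and your one-line verification of it is valid, since $H(X_i\mid Q)\geq H(X_i\mid X_{<i},Q)$ holds by strong subadditivity even with $Q$ quantum. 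Nayak's original proof is an induction on the von Neumann entropy of the average codeword, peeling off one bit at a time; it is essentially your chain-rule argument unrolled, so the difference is presentational rather than substantive. The only caveat worth recording: the Fano step uses $h(P_e)\leq h(1-p)=h(p)$, which requires $1-p\leq 1/2$, i.e.\ $p\geq 1/2$; this assumption is implicit in the theorem as stated (for $p<1/2$ the bound is simply false, e.g.\ a trivial zero-qubit encoding achieves any $p\leq 1/2$), and it is satisfied in the regime where the paper invokes the bound, but you should state it explicitly.
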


\subsection{Port-Based Teleportation}

On the other side of the story, we can ask if the converse of such a statement is true, that is, what the number of bits we need to specify a qubit is. Intuitively, since we can only approximate a complex number with a finite number of bits, this should be impossible, but teleportation tasks circumvent this limitation by making use of entanglement as a resource.

Here we are interested in a particular teleportation task called Port-Based Teleportation (PBT),  introduced by Ishizaka and Hiroshima~\cite{ishizaka_asymptotic_2008}, where Bob's computational power is restricted to tracing out registers. Note that in this situation the only meaningful classical communication {is communicating a port}, thus the classical communication cost is completely determined by the number of registers that Bob holds, in other words, the dimension of Bob's part of the resource state. 
\begin{algorithm}
    \floatname{algorithm}{Protocol}
    \caption{$(d,N,p)$ Probabilistic Port-Based Teleportation}\label{alg:prob_pbt}
    \textit{Setting.} Both Alice and Bob each have $N$ qudits in $A_1,A_2,\ldots, A_N$ and $B_1,B_2,\ldots, B_N$ respectively, which are denoted by $A$ and $B$ as a whole. They share an arbitrary entangled resource in $AB$, and Alice holds an additional input register $A_0$. There is a single round of classical communication from Alice to Bob. Bob's correction operations are restricted to tracing out registers.
    \sbline
    \textit{Inputs.} A $d$-dimensional quantum state in Alice's register $A_0$.
    \sbline
    \textit{Goal.} Alice's input state to be teleported to an arbitrary output port of Bob, with probability of success $p$.
    \sbline
    \textit{The protocol:}
    \begin{algorithmic}[1]
		\STATE Alice performs a $N+1$ outcome POVM measurement on her $d$-dimensional input state and her half of the $N$ shared resource states.
        \STATE Alice communicates either a port $x\in[N]$ or abort $x=0$ to Bob with $\log(N+1)$ bits of classical communication.
		\STATE If $x=0$ Bob aborts, else he traces out all but register $B_x$.
    \end{algorithmic}
\end{algorithm}

We can distinguish two types of PBT tasks; probabilistic and deterministic ones. In the probabilistic case, Alice performs a $N+1$ outcome POVM $\mathcal{M}=\{M_x\}_{x=0}^N$ such that only outcomes $x=1,\ldots,N$ refer to a successful transmission of the state, in this case the state in port $x$ has perfect fidelity with respect to the state intended, and outcome $x=0$ refers to a failure in the transmission, therefore Bob aborts. In this setting it is natural to study the performance of the protocol by studying the success probability. To avoid confusion, we will denote these as $(d,N,p)$-PBT protocols, where $p$ denotes the probability of success of the protocol.

Deterministic PBT does not consider the option of failure, instead Alice performs a $N$ outcome POVM, and always transmits a port. It is clear that such a protocol must be approximate, meaning that the outcome state cannot always have perfect fidelity with the intended one, hence we study the performance of the protocol in terms of the {entanglement fidelity of the channel}.
\begin{algorithm}
    \floatname{algorithm}{Protocol}
    \caption{$(d,N,F)$ Deterministic Port-Based Teleportation}\label{alg:det_pbt}
    \textit{Setting.} Both Alice and Bob each have $N$ qudits in $A_1,A_2,\ldots, A_N$ and $B_1,B_2,\ldots, B_N$ respectively, which are denoted by $A$ and $B$ as a whole. They share an arbitrary entangled resource in $AB$, and Alice holds an additional input register $A_0$. There is a single round of classical communication from Alice to Bob. Bob's correction operations are restricted to tracing out registers.
    \sbline
    \textit{Inputs.} A $d$-dimensional quantum state in Alice's register $A_0$.
    \sbline
    \textit{Goal.} Alice's input state to be teleported to an arbitrary output port of Bob, with fidelity $F$.
    \sbline
    \textit{The protocol:}
    \begin{algorithmic}[1]
		\STATE Alice performs a $N$ outcome POVM measurement on her $d$-dimensional input state and her half of the $N$ shared resource states.
        \STATE Alice always communicates a port $x\in[N]$ to Bob with $\log(N)$ bits of classical communication.
		\STATE Bob traces out all but register $B_x$.
    \end{algorithmic}
\end{algorithm}

For protocols with a covariance property, {as is the case of PBT}, both the entanglement fidelity $F$ and the diamond norm $\eps$ are equivalent~\cite[Appendix B, Lemma 2]{yang_optimal_2020},
\begin{equation}
    F = 1-\frac{\eps}{2},
\end{equation}
In the rest of this work we will talk about probability of success and entanglement fidelity as measures of accuracy.

In particular, any probabilistic $(d,N,p)$-PBT can be used to construct a deterministic one, just by Alice transmitting a random register index whenever the measurement outcome is failure, i.e.\ $x=0$. The achieved entanglement fidelity is 
\begin{equation}\label{eq:fidelity_from_prob}
    F\geq p+\frac{1-p}{d^2}.
\end{equation}

There are some known trade-offs between the dimension of the input state $d$, the number of ports $N$, and the accuracy of the protocol, depending on the resource states, for both the probabilistic and the deterministic PBT. As all these parameters can be optimized independently, in order to simplify the analysis it is standard in the literature to consider $N$ independent maximally entangled resource states (one per port) and the Pretty Good Measurement (PGM), in this case we call the protocol \textit{standard}. We include the known results between resource trade-offs (as far as we are aware) in~\Cref{fig:comparison_pbt}. 

\begin{table}[ht]
	\begin{center}
		\begin{tabular}{ c|c c|c c| }
			\cline{2-5}
			& \multicolumn{2}{c|}{Existence} & \multicolumn{2}{c|}{Optimality} \\ \hline
			Det. & $F^{\mathrm{sta}}\geq 1-\frac{d^2-1}{N}$ &\cite{ishizaka_asymptotic_2008}& $F^*\leq1-\frac{1}{4(d-1)N^2}+O\left(\frac{1}{N^3}\right)$ &\cite{ishizaka_remarks_2015} \\ \cline{2-5}
			& $\eps^{\mathrm{sta}}\leq \frac{4d^2}{\sqrt{N}}$ &\cite{beigi_simplified_2011} & $F^*\leq1-\frac{d^2-1}{8N^2}+O\left(\frac{1}{N^3}\right)$ &\cite{christandl_asymptotic_2021} \\ \cline{2-5}
			&$F\geq1-\frac{d^2(d^2-2)-1}{d^2(d^2+N-1)}$& \cref{eq:fidelity_from_prob}& $F^*\leq K\frac{\log(d)}{d}\left(2N+\frac{2}{3}\right)$ &\cite{kubicki_resource_2019} \\ \hline
			Prob. & $p=\frac{N}{N-1+d^2}$ &\cite{studzinski_port-based_2017} & $p^*\leq \frac{N}{N-1+d^2}$ &\cite{pitalua-garcia_deduction_2013,mozrzymas_optimal_2018}\\ \cline{2-5}
		\end{tabular}
		\caption{Probability $p$ and entanglement fidelity $F$ (see~\Cref{def:entanglement_fid}) of teleporting a qudit with~$N$ ports for PBT.}\label{fig:comparison_pbt}
	\end{center}
\end{table}

\subsection{Universal Programmable Quantum Processors}

An interesting application of PBT is of Universal Programmable Quantum Processors (UPQPs). We ideally want to build quantum computers as we do with the classical ones, by building a universal processor independent of the data and programs that we insert, only that now both these elements can be quantum states or unitary. Informally the task is the following: Given a unitary, we are asked to apply it in an unknown state, but only after losing access to the unitaries. Therefore, we want to find a way to store the unitary in a quantum memory. 

\begin{definition}[UPQP]
	A channel $\P\in CPTP(\H_D \otimes \H_M)$ is a $d$-dimensional Universal Programmable Quantum Processor if for every unitary $U\in\U(\H_D)$, there exists a unit vector $\ket{\phi_U}\in \mathcal{D}(\H_M)$ such that
	\begin{equation}
        \Tr_M[\P(\rho_D\otimes \dyad{\phi_U}_M)] = U\rho_D U^\dagger,\quad \text{for all  } \rho_D\in \mathcal{D}(\H_D).
    \end{equation}
\end{definition}

Along their introduction, Nielsen and Chuang~\cite{nielsen_programmable_1997} proved the {impossibility of perfectly} implementing UPQPs. The basic idea is that for every unitary we desire to apply, there needs to {be an orthogonal} memory state, and since for any given dimension there {are infinitely many possible} unitaries, we need an infinite dimensional memory space to store them. However, the proof does not hold if we relax the previous definition; that is, for approximate UPQPs, where we only want to implement the desired unitary with approximate accuracy (or with certain probability of success).

\begin{table}
	\begin{center}
		\begin{tabular}{ c|c c|c c| }
			\cline{2-5}
			& \multicolumn{2}{c|}{Upper bounds} & \multicolumn{2}{c|}{Lower bounds} \\ \hline
			Error dep. & $\left(\frac{1}{\eps}\right)^{d^2}$ & \cite{kubicki_resource_2019} & $\frac{d^2}{\eps}$ & \cite{majenz_entropy_2018}\\ \cline{2-5}
			& $\left(\frac{1}{\eps}\right)^{\frac{d^2-1}{2}}$ & \cite{yang_optimal_2020} & $\left(\frac{1}{\eps}\right)^{\frac{d^2-1}{2}-\alpha}$, $\forall \alpha>0$ & \cite{yang_optimal_2020} \\ \hline
			Dimension dep. & $d^2\log(1/\eps)$ & \cite{kubicki_resource_2019} & $\frac{(1-\eps)}{12}d-\frac{2}{3}\log(d)$ & \cite{kubicki_resource_2019} \\
            $\log$ power & & & $\frac{(1-\eps)}{4}d$ {if unitary processor} &  \\ \cline{2-5}
            & $d^2\frac{\log(1/\eps)}{2}$ & \cite{yang_optimal_2020} & $d^2\frac{\log(1/\eps)}{2}-\beta$, $\forall\beta>\frac{\log(1/\eps)}{2}$ & \cite{yang_optimal_2020} \\ \cline{2-5}
		\end{tabular}
		\caption{Performance of $d$-dimensional $\eps$-UPQP in terms of memory dimension $m$.}\label{fig:comparison_upqp}
	\end{center}
\end{table}

\begin{definition}[Approximate UPQP] A channel $\P\in CPTP(\H_D\otimes \H_M)$ is a $d$-dimensional $\eps$-Universal Programmable Quantum Processor, if for every unitary $U\in\U(\H_D)$ there exists a unit vector $\ket{\phi_U}\in \mathcal{D}(\H_M)$ such that
	\begin{equation}\label{eq:aUPQP} 
    \frac{1}{2}\norm{\Tr_M\left[\P(\cdot\otimes\dyad{\phi_U}_M)\right]-U(\cdot)U^\dag}_\diamond\leq\eps.
	\end{equation}
\end{definition}

However, we cannot do much better for approximate UPQPs. 
Yang, Renner, and Chiribella~\cite{yang_optimal_2020} showed tight upper and lower bounds for the memory dimension that both blows up inverse polynomially with small error $\eps$, and exponentially when increasing the dimension of the input state $d$, closing a long-lasting open question. 
We include the known results between resource trade-offs (as far as we are aware) in~\Cref{fig:comparison_upqp}.

\section{Port-Based State Preparation}\label{sec:pbsp}

Here we introduce a classical-quantum hybrid communication task where Alice holds a complete classical description of a quantum state, and her goal is for Bob to end up with the quantum state whose description she holds in one of his ports. In order to achieve this, both parties share a finite amount of entangled resources, Alice performs a {measurement on her part} of the shared resources and Bob's actions are restricted to tracing out his side of the resources, or `ports'. We call this Port-Based State Preparation (PBSP).

In a way, the task is a hybrid between Remote State Preparation (RSP) tasks introduced in the seminal paper by Bennett, DiVicenzo, Shor, Smolin, Terhal, and Wootters~\cite{bennett_remote_2001}, and Port-Based Teleportation. Recall that RSP achieves asymptotic rates of one qubit transmission per bit by allowing Bob to perform any operation on his side, and we are interested in knowing what happens if we restrict Bob to covariant classical operations, as in the case of PBT.

\begin{figure}[ht]
    \centering
    \tikzset{every picture/.style={line width=0.75pt}} 

\begin{tikzpicture}[x=0.75pt,y=0.75pt,yscale=-1,xscale=1]

\draw    (109,180) -- (89,120) -- (109,60) ;
\draw    (109,60) -- (129,60) ;
\draw    (109,180) -- (237,180) ;
\draw    (109,199.97) -- (89,139.97) -- (109,79.97) ;
\draw    (109,79.97) -- (129,79.97) ;
\draw    (109,200) -- (237,200) ;
\draw   (129,90) .. controls (129,90) and (129,90) .. (129,90) .. controls (162.14,90) and (189,69.85) .. (189,45) .. controls (189,20.15) and (162.14,0) .. (129,0) -- cycle ;
\draw    (89,10) -- (129,10) ;
\draw    (89,14) -- (129,14) ;
\draw   (189,43) -- (259,43) -- (259,170) ;
\draw   (189,47) -- (255,47) -- (255,170) ;
\draw   (237,170) -- (277,170) -- (277,210) -- (237,210) -- cycle ;
\draw    (277,190) -- (297,190) ;
\draw [color={rgb, 255:red, 0; green, 0; blue, 0 }  ,draw opacity=0.5 ] [dash pattern={on 4.5pt off 4.5pt}]  (0,130) -- (380,130) ;

\draw (51,95.4) node [anchor=north west][inner sep=0.75pt]    {$ \begin{array}{l}
\ket{\phi _{d}^{+}}\\
\ \ \vdots \\
\ket{\phi _{d}^{+}}
\end{array}$};
\draw (146,38.4) node [anchor=north west][inner sep=0.75pt]    {$\mathcal{M}$};
\draw (60,3.4) node [anchor=north west][inner sep=0.75pt]    {$\psi _{d}$};
\draw (240,180.4) node [anchor=north west][inner sep=0.75pt]    {$Tr_{B_{\overline{x}}}$};
\draw (272,73.4) node [anchor=north west][inner sep=0.75pt]    {$x\in \{0,1\}^{*}$};
\draw (315,181.4) node [anchor=north west][inner sep=0.75pt]    {$=\ \ket{\psi _{d}}$};
\draw (2,38) node [anchor=north west][inner sep=0.75pt]   [align=left] {Alice};
\draw (2,180) node [anchor=north west][inner sep=0.75pt]   [align=left] {Bob};

\end{tikzpicture}
    \caption{Port-Based State Preparation protocol.}
\end{figure}
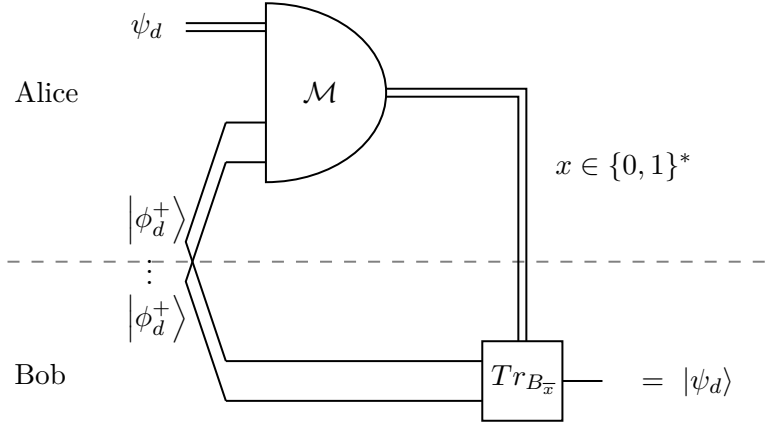

We can distinguish between probabilistic and deterministic PBSP tasks. Assuming that both parties share $N$ entangled states, in probabilistic PBSP~(\Cref{alg:prob_pbsp}), Alice's measurement has $N+1$ possible outcomes; one per port, after which Bob's register will be exactly in the desired state, and one for the failure probability. Naturally we use the protocol's success probability as an accuracy measure.

\begin{algorithm}
    \floatname{algorithm}{Protocol}
    \caption{$(d,N,p)$ Probabilistic Port-Based State Preparation.}\label{alg:prob_pbsp}
    \textit{Setting.} {Both Alice and Bob each have $N$ qudits in $A_1,A_2,\ldots, A_N$ and $B_1,B_2,\ldots, B_N$ respectively, which are denoted by $A$ and $B$ as a whole. They share an arbitrary entangled resource in $AB$}. There is a single round of classical communication from Alice to Bob. Bob's correction operations are restricted to tracing out registers.
    \sbline
    \textit{Inputs.} Alice receives the classical description of a $d$-dimensional quantum state.
    \sbline
    \textit{Goal.} Alice's input state to be prepared in an arbitrary output port of Bob, with probability of success $p$.
    \sbline
    \textit{The protocol:}
    \begin{algorithmic}[1]
		\STATE Alice performs a $N+1$ outcome POVM measurement on her half of the $N$ shared resource states, which can depend on the classical description of her state.
        \STATE Alice communicates either a port $x\in[N]$ or abort $x=0$ to Bob with $\log(N+1)$ bits of classical communication.
		\STATE If $x=0$ Bob aborts. Else he traces out all but register $B_x$.
    \end{algorithmic}
\end{algorithm}

In deterministic PBSP~(\Cref{alg:det_pbsp}), Alice's measurement has $N$ possible outcomes, one per port, but in this case it is impossible for Bob's register after tracing out to be exactly Alice's intended state for all the inputs. In this case, we use the fidelity of the outcome state with respect to the input as a measure of success of the protocol. 

\begin{remark}
    Conventional channel fidelity definitions consider purifications of a given state with respect to a reference space and apply the channel to half of it, as is the case in~\Cref{def:entanglement_fid,def:wc_fid}. Nevertheless the input of our PBSP protocol is a classical description of the state we intend to send, thus it is not clear what it would mean to apply our classical-to-quantum map to half of it. We could instead consider average fidelity as a figure of merit, see~\Cref{def:ac_fid}, this is 
    \begin{equation}
        f(\Phi)\eqdef\int_\psi\bra{\psi}\Phi(\psi)\ket{\psi}\mathrm{d}\psi.
    \end{equation}
    However, this integral is not necessarily well defined. Instead, we define the \emph{worst-case fidelity} of a classical-to-quantum map as
    \begin{equation}
        F_\mathrm{wc}(\Phi)\eqdef \inf_{\ket{\psi}\in\H}F(\Phi(\psi),\ket{\psi}).
    \end{equation}

    Intuitively the covariance-like property of PBSP makes the worst-case fidelity comparable to average fidelity. Clearly, $F_\mathrm{wc}(\Phi)\leq f(\Phi)$. Moreover, given a map $\Phi:\{0,1\}^*\to\mathcal{H}_d$, there exists a map $\hat{\Phi}:\{0,1\}^*\to\mathcal{H}_d$ such that $F_\mathrm{wc}(\hat{\Phi})\geq f(\Phi)$. For example, we can define the map $\hat{\Phi}$ by Alice and Bob first agreeing to a random $d$-dimensional unitary $U\sim\mu_d$ sampled with respect to the Haar-measure, then performing the protocol $\psi\mapsto \Phi(U\psi U^\dag)$, and finally Bob uncomputing the unitary $U$. The new protocol has the desired fidelity
    \begin{equation}
        F_\mathrm{wc}(\hat{\Phi})
        =\inf_{\ket{\psi}\in\H_d}\int_U \bra{\psi}U^\dag\Phi(U\psi U^\dag)U\ket{\psi}\mathrm{d} U
        =\inf_{\ket{\psi}\in\H_d}\int_\phi \bra{\phi}\Phi(\phi)\ket{\phi}\mathrm{d} \phi=f(\Phi).
   \end{equation}
   We stress again that this is only an intuitive check to justify the use of the worst-case fidelity as a figure of merit, but the above integrals over classical-to-quantum maps are not well defined.
\end{remark}

\begin{algorithm}
    \floatname{algorithm}{Protocol}
    \caption{$(d,N,F)$ Deterministic Port-Based State Preparation.}\label{alg:det_pbsp}
    \textit{Setting.} {Both Alice and Bob each have $N$ qudits in $A_1,A_2,\ldots, A_N$ and $B_1,B_2,\ldots, B_N$ respectively, which are denoted by $A$ and $B$ as a whole. They share an arbitrary entangled resource in $AB$}. There is a single round of classical communication from Alice to Bob. Bob's correction operations are restricted to tracing out registers.
    \sbline
    \textit{Inputs.} Alice receives the classical description of a $d$-dimensional quantum state.
    \sbline
    \textit{Goal.} Alice's input state to be prepared in an arbitrary output port of Bob, with fidelity $F$.
    \sbline
    \textit{The protocol:}
    \begin{algorithmic}[1]
		\STATE Alice performs a $N$ outcome POVM measurement on her half of the $N$ shared resource states, which can depend on the classical description of her state.
        \STATE Alice always communicates a port $x\in[N]$ to Bob with $\log(N)$ bits of classical communication.
		\STATE Bob traces out all but register $B_x$.
    \end{algorithmic}
\end{algorithm}

As mentioned earlier, Alice holding the complete classical description of the desired state allows her to construct a measurement that depends on it. In particular, this allows her to perform independent measurement operations in each port, which turns out to be enough to obtain a probability of success that exponentially increases in terms of the number of ports. Intuitively, this is equivalent to independent coin tossing, something that we are not allowed in the original Port-Based Teleportation tasks.

\begin{theorem}\label{thm:prob_existence_pbsp}
	There exists a $(d,N,p)$ probabilistic PBSP protocol with probability of success 
	\begin{equation}
		p=1-\left(1-\frac{1}{d}\right)^N.
	\end{equation}
\end{theorem}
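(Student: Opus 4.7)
The plan is to exhibit a protocol that uses $N$ independent maximally entangled states $\ket{\phi_d^+}_{A_iB_i}$ as the resource and a sequential POVM on Alice's side, built from the single-port projector onto $\ket{\psi^*}$ (the complex conjugate of the target), together with its complement. The key observation is the standard teleportation-by-postselection identity
\begin{equation}
    \bigl(\bra{\psi^*}_{A}\otimes I_B\bigr)\ket{\phi_d^+}_{AB}=\tfrac{1}{\sqrt{d}}\ket{\psi}_B,
\end{equation}
so projecting any single $A_i$ onto $\ket{\psi^*}$ leaves $B_i$ in the desired state $\ket{\psi}$, and succeeds with probability $1/d$ independently across ports because the resource states are product.

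Concretely, given the classical description $\psi$, I would set $P_i\eqdef \dyad{\psi^*}_{A_i}$ and $Q_i\eqdef I_{A_i}-P_i$, and define the $(N+1)$-outcome POVM
\begin{equation}
    M_x\eqdef Q_1\otimes\cdots\otimes Q_{x-1}\otimes P_x\otimes I_{A_{x+1}\cdots A_N}\quad\text{for } x\in[N],\qquad M_0\eqdef Q_1\otimes\cdots\otimes Q_N.
\end{equation}
These operators are positive by construction. A short telescoping argument (or induction on $N$ using $P_x+Q_x=I_{A_x}$) gives $\sum_{x=0}^N M_x=I$, so $\{M_x\}_{x=0}^N$ is a valid POVM. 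I would state and carry out this telescoping as the first lemma-style step.

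Next I would compute the outcome probabilities against Alice's reduced state, which is the maximally mixed $(I/d)^{\otimes N}$ on $A_1\cdots A_N$ since each $\ket{\phi_d^+}$ has maximally mixed marginal. Using $\Tr[P_i(I/d)]=1/d$ and $\Tr[Q_i(I/d)]=1-1/d$ together with independence across tensor factors, one gets $\Pr[x]=(1-1/d)^{x-1}(1/d)$ for $x\in[N]$, summing to
\begin{equation}
    p=\sum_{x=1}^N \tfrac{1}{d}\left(1-\tfrac{1}{d}\right)^{x-1}=1-\left(1-\tfrac{1}{d}\right)^N,
\end{equation}
which is the claimed success probability. Finally, conditioned on outcome $x\geq1$, I would track the post-measurement state on $B_x$: the $Q_j$ factors for $j<x$ act trivially on registers $B_j$ that are subsequently discarded, and the $P_x$ factor combined with the teleportation identity above places $B_x$ exactly in $\ket{\psi}$, so tracing out $B_{\bar{x}}$ yields $\ket{\psi}$ deterministically whenever $x\geq1$.

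No step is particularly subtle here; the only point that needs care is the verification that $\sum_x M_x=I$ and the clean bookkeeping showing that the $Q_j$'s on $A_j$ ($j<x$) and the identities on $A_j$ ($j>x$) do not disturb the state on $B_x$ after the final partial trace, which follows from the product structure of the resource. Thus the main burden is notational rather than mathematical.
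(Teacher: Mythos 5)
Your proposal is correct and follows essentially the same route as the paper: the same resource of $N$ independent maximally entangled states, the same per-port projection onto $\ket{\psi^*}$ with the identity $\Tr_{A_i}[(\ketbra{\psi^*}_{A_i}\otimes I_{B_i})\ketbra{\phi^+_d}_{A_iB_i}]=\tfrac{1}{d}\ketbra{\psi}_{B_i}$, and the same failure element $M_0=\bigotimes_{i=1}^N(I-\ketbra{\psi^*})_{A_i}$, giving $p=1-(1-\tfrac{1}{d})^N$. The only (immaterial) difference is that your POVM reports the first successful port via a telescoping decomposition, while the paper symmetrizes over all successful ports with weights $1/|S|$; both are valid POVMs and yield the same success probability and an exact $\ket{\psi}$ in the announced port.
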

\begin{proof}
	Let us assume that Alice and Bob share $N$ independent maximally entangled qudits as a resource. Given a description of a $d$-dimensional state $\ket{\psi}$, denoted $\psi$, Alice's measurement will {consist of} checking if the state is in any port. This can be done by just projecting to the conjugate $\ket{\psi^*}$, because for every register $i\in[N]$ we have
    \begin{gather}
        \Tr_{A_i}\left[(\ketbra{\psi^*}_{A_i}\otimes I_{B_i})\ketbra{\phi^+_d}_{A_iB_i}\right] = \frac{1}{d}\ketbra{\psi}_{B_i}, \\
        \Tr_{A_i}\left[((I-\ketbra{\psi^*}_{A_i})\otimes I_{B_i})\ketbra{\phi^+_d}_{A_iB_i}\right] = \frac{1}{d}(I-\ketbra{\psi})_{B_i}.
    \end{gather}
    Alice's partial measurement can be normalized to obtain the POVM $\mathcal{M}=\{M_x(\psi)\}_{x=0}^N$ with elements
	\begin{equation}\label{eq:meas_dpbsp}\begin{split}
		M_0(\psi)& \eqdef \bigotimes_{i=1}^N (I-\ketbra{\psi^*})_{A_i},\\
		M_x(\psi)& \eqdef\sum_{\substack{S \subseteq [N]\\ \text{s.t.} \, x\in S}} \left[ \frac{1}{|S|}
        \bigotimes_{\substack{ i \in S}}\ketbra{\psi^*}_{A_i} \otimes \bigotimes_{\substack{ i \not\in S}}(I-\ketbra{\psi^*})_{A_i}\right],\quad \text{for } x\in[N].
        \end{split}
	\end{equation}
    Outcome $x=0$ means that the state has not been found in any port, and thus the protocol has failed. After any other outcome $x\in[N]$, Bob's system in register $B_x$ will be in the desired state
    \begin{equation}\begin{split}
        \rho_{B_x}&=\frac{\Tr_{AB_{\bar{x}}}\left[M_x(\psi)\bigotimes_{i=1}^N\ketbra{\phi_d^+}_{A_iB_i}\right]}{\Tr[M_x(\psi)\bigotimes_{i=1}^N\ketbra{\phi_d^+}_{A_iB_i}]}\\
        &=d\Tr_{A_x}\left[(\ketbra{\psi^*}_{A_x}\otimes I_{B_x})\ketbra{\phi^+_d}_{A_xB_x}\right]\\
        &=\ketbra{\psi}_{B_x}.
    \end{split}\end{equation}
    Given this measurement, the probability of success of the protocol can be calculated easily as we have a closed expression for the failing measurement
    \begin{equation}
        p = 1-\Tr[M_0(\psi)\bigotimes_{i=1}^N\ketbra{\phi_d^+}_{A_iB_i}] = 1-\left(1-\frac{1}{d}\right)^N.
    \end{equation}
\end{proof}

We can always construct a deterministic PBSP protocol from a probabilistic PBSP by just outputting a random port when a failure is measured. Although in the PBT protocol this construction gives us a greater fidelity than the probability of success, see~\Cref{eq:fidelity_from_prob}, for the PBSP case this is not necessarily true. For example, in the measurement we described above the extra bit is always orthogonal to the desired state, thus the deterministic PBSP obtained from spreading the failure evenly among the ports leads to the same channel fidelity as the probability of success of the probabilistic protocol described in~\Cref{thm:prob_existence_pbsp}.

\begin{theorem}\label{thm:fid_existence_pbsp}
    There exists a $(d,N,F)$ deterministic PBSP protocol with channel fidelity
	\begin{equation}
		{F_\mathrm{wc}}=1-\left(1-\frac{1}{d}\right)^N.
	\end{equation}
\end{theorem}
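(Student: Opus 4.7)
The plan is to convert the probabilistic protocol from \Cref{thm:prob_existence_pbsp} into a deterministic one in the simplest possible way: redistribute the failure outcome evenly across the $N$ port outcomes. Concretely, using the same $N$ maximally entangled resource states, define the $N$-outcome POVM $\tilde{\mathcal{M}}(\psi)=\{\tilde{M}_x(\psi)\}_{x=1}^{N}$ by
\begin{equation}
    \tilde{M}_x(\psi) \eqdef M_x(\psi) + \tfrac{1}{N}M_0(\psi),
\end{equation}
where $M_0(\psi)$ and $M_x(\psi)$ are the POVM elements from \Cref{eq:meas_dpbsp}. This is a valid $N$-outcome POVM since $\sum_x \tilde{M}_x(\psi)=\sum_{x=0}^N M_x(\psi)=I$. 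Alice now always communicates an index $x\in[N]$ and Bob traces out all registers but $B_x$.

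The key observation is that the failure branch contributes zero fidelity. Indeed, $M_0(\psi)=\bigotimes_i (I-\ketbra{\psi^*})_{A_i}$ acts as the orthogonal complement of $\ket{\psi^*}$ on every port. Using the identity $\Tr_{A_i}[((I-\ketbra{\psi^*})_{A_i}\otimes I_{B_i})\ketbra{\phi^+_d}_{A_iB_i}]=\tfrac{1}{d}(I-\ketbra{\psi})_{B_i}$ already noted in the proof of \Cref{thm:prob_existence_pbsp}, for every port $x$ the conditional state on $B_x$ coming from the $M_0$-branch is proportional to $I-\ketbra{\psi}$, which is orthogonal to $\ket{\psi}$. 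Therefore the overlap of this branch with $\ket{\psi}$ vanishes, regardless of which port index Alice happened to announce.

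With this in hand I would compute the worst-case fidelity directly. Writing the induced channel as $\Phi(\psi)=\sum_{x=1}^N \Tr_{AB_{\bar x}}[(\tilde{M}_x(\psi)\otimes I_B)\,\rho_{AB}]$, linearity splits it into the ``success'' contribution (from the $M_x$-terms, which by \Cref{thm:prob_existence_pbsp} perfectly prepare $\ket{\psi}$ on port $x$ and sum to the success probability $1-(1-1/d)^N$) and the ``failure'' contribution (from the $\tfrac{1}{N}M_0$-terms, whose fidelity with $\ket{\psi}$ is zero by the orthogonality observation above). Hence
\begin{equation}
    \bra{\psi}\Phi(\psi)\ket{\psi}=1-\left(1-\tfrac{1}{d}\right)^N
\end{equation}
for every $\ket{\psi}$, and taking the infimum over $\ket{\psi}$ yields the claimed $F_\mathrm{wc}$.

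There is essentially no hard step here; the main thing to verify carefully is that the orthogonality argument really does give zero contribution independently of the combinatorial expansion of $M_x(\psi)$ as a sum over subsets $S\ni x$, and that the symmetry of the resource and of $\tilde{M}_x(\psi)$ across ports lets one identify the sum of success contributions with exactly the probabilistic success probability computed in \Cref{thm:prob_existence_pbsp}. If anything turns out to be subtle it will be the bookkeeping of which port receives $\ket{\psi}$ in each subset term of $M_x(\psi)$; but this is handled by the weights $1/|S|$ already built into \Cref{eq:meas_dpbsp}, which make the ``per-port'' success probability sum up correctly to $1-(1-1/d)^N$.
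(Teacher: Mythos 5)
Your proposal is correct and follows essentially the same route as the paper: your POVM $\tilde{M}_x(\psi)=M_x(\psi)+\tfrac{1}{N}M_0(\psi)$ is exactly the paper's $M_x'(\psi)$ in \Cref{eq:meas_fidelity_pbsp}, and the key point -- that the redistributed failure branch yields a state proportional to $I-\ketbra{\psi}$ on $B_x$ and hence contributes zero overlap -- is the same orthogonality observation the paper uses. The only (harmless) difference is that you reuse the success probability from \Cref{thm:prob_existence_pbsp} by linearity instead of re-deriving the binomial sum explicitly, which is a legitimate shortcut since the conditional states on the $M_x$-branches are exactly $\ketbra{\psi}$.
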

\begin{proof}
    Let us assume that Alice and Bob share $N$ independent maximally entangled qudits as a resource. Given a description of a $d$-dimensional state $\ket{\psi}$, denoted $\psi$, Alice's measurement will {consist of} checking if the state is in any port via the measurements 
    \begin{equation}\label{eq:meas_fidelity_pbsp}
		M_x'(\psi)= M_x(\psi) + \frac{1}{N}\bigotimes_{i=1}^N(I-\ketbra{\psi^*})_{A_i},\quad\text{for }x\in[N],
	\end{equation}
    where the $M_x(\psi)$'s are the same as in~\Cref{eq:meas_dpbsp}. Any of the outcomes $x\in[N]$ will now occur with equal probability
    \begin{equation}\label{eq:prob_meas}\begin{split}
		p_x & =\Tr[(M_x'(\psi)\otimes I_B)\bigotimes_{i=1}^N\ketbra{\phi_d^+}_{A_iB_i}] \\
        & =\frac{1}{d}\sum_{i=0}^{N-1}\frac{\binom{N-1}{i}}{i+1}\left(\frac{1}{d}\right)^i\left(1-\frac{1}{d}\right)^{(N-1)-i} + \frac{1}{N}\left(1-\frac{1}{d}\right)^N,
    \end{split}
	\end{equation}
    and after outcome $x\in[N]$ the state in Bob's register $B_x$ will be 
	\begin{multline}
        \rho_{B_x} = \frac{1}{p_x}\left(\frac{1}{d}\ketbra{\psi}_{B_x}\sum_{i=0}^{N-1}\frac{\binom{N-1}{i}}{i+1}\left(\frac{1}{d}\right)^{i}\left(1-\frac{1}{d}\right)^{(N-1)-i}\right.\\
        \left.+\frac{1}{Nd}(I-\ketbra{\psi})_{B_x}\left(1-\frac{1}{d}\right)^{N-1}\right).
	\end{multline}
	From the above expression we can see that for every input $\psi$ the fidelity between the outcome of the protocol $\Phi(\psi)$ and the desired state $\ket{\psi}$ will be equivalent to the probability of success $p$, as the latest term is orthogonal to $\ket{\psi}_{B_x}$ and thus will not contribute to the overlap between the states. Formally,
	\begin{equation}\begin{split}
		F\left(\Phi(\psi),\ket{\psi}\right)&=\sum_{x=1}^Np_x\bra{\psi}_{B_x}\rho_{B_x}\ket{\psi}_{B_x}=\frac{N}{d}\sum_{i=0}^{N-1}\frac{\binom{N-1}{i}}{i+1}\left(\frac{1}{d}\right)^{i}\left(1-\frac{1}{d}\right)^{(N-1)-i}\\
		&=1-\left(1-\frac{1}{d}\right)^N.
	\end{split}\end{equation}
\end{proof}

Note that the accuracy of PBSP scales exponentially in terms of the number of ports, both for the probabilistic and the deterministic case. This contrasts the PBT case; where the upper bound for both terms is linear in terms of number of ports. In other words, having the classical description of the state gives the sender enough power to go from linear to exponential accuracy in the number of ports; or that PBSP is a strictly easier task to perform than PBT.

\section{Universal Programmable Hybrid Processors}

Quantum computers have the ability to perform arbitrary unitary operations in two-level systems, i.e., we can decompose any unitary acting on qubits in gate arrays which can be implemented using finite resources. Moreover, classical computers can perform arbitrary operations on any input data with a fixed universal gate array, which can be programmed to perform any operation on the input data.  However, this was shown to be impossible quantumly by Nielsen and Chuang~\cite{nielsen_programmable_1997}: the so-called \emph{no-programming theorem} states that perfect universal quantum processors require an orthogonal state for each unitary that we desire to perform, and since there {are infinitely many unitaries} acting on any $d$-level system, we would need infinite-dimensional spaces to perform arbitrary unitaries on quantum states.
The relaxed scenario of approximate UPQPs has also attracted much attention~\cite{hillery_approximate_2006,kubicki_resource_2019,majenz_entropy_2018}, but a recent article by Yang, Chiribella and Renner~\cite{yang_optimal_2020} closed the gap between the upper and lower bound on the memory dimension $m$ of a $d$-dimensional $\eps$-approximate UPQP:
\begin{equation}
    \left(\frac{1}{\eps}\right)^{\frac{d^2-1}{2}}\geq m \geq \left(\frac{1}{\eps}\right)^{\frac{d^2-1}{2}-\alpha},\quad\text{for all }\alpha>0.
\end{equation}
In other words, the memory dimension scales with the dimension-square exponentially in the error, which nearly saturates the upper bound in~\cite{kubicki_resource_2019} given by the $\eps$-net of the unitaries. Intuitively, it is as saying that the memory cannot do much better than pointing to a unitary from the $\eps$-net, which is in a way a classical operation.

Since for some applications the user of the quantum program might actually know which state they would like to apply the (to them unknown) unitary to, we imagine an analogous task to $\eps$-UPQPs, where we want to program a universal gate array but to which we will give a classical input and expect to perform an arbitrary unitary operation on the quantum state the classical input specifies. The initial state of our classical-quantum system will be of the form
\begin{equation}
    (\psi,\ket{\phi}_M),
\end{equation}
where~$\psi$ will refer to the classical description of the $d$-dimensional quantum state~$\ket{\psi}$ we want to perform the operation on, w.l.o.g.\ we assume it to be pure, and~$\ket{\phi}_M$ is the state of the $m$-dimensional program register. For any input, the processor~$\mathcal{P}$ will then map this system to
\begin{equation}
    (\psi,\ket{\phi}_M)\mapsto \P(\psi,\ket{\phi}_M),
\end{equation}
and we say that the processor $\P$ implements a unitary $U$ in the data register~$\ket{\psi}_D$ specified by the classical input $\psi$ if 
\begin{equation}\label{eq:def_qcupp}
    {\Tr_M\left[\P(\psi,\ket{\phi}_M)\right] = U\ketbra{\psi}_DU^\dag.}
\end{equation}
Note that although the processor itself is not unitary, since it takes classical and quantum data, we want~\Cref{eq:def_qcupp} to hold for \emph{any input state}, thus it is not possible to just encode all the information about the unitary desired in the classical data. Let us denote by~$\P_\psi\eqdef \P(\psi,\cdot)$ the universal gate array specified by classical input~$\psi$. 

As in the case of $\eps$-UPQPs, we are interested in an approximate version of this processor. We will call such a system a \emph{approximate universal programmable hybrid processor} ($\eps$-UPHP) to refer to the classical-input quantum-output dynamics of the system explicitly.
We can also distinguish between probabilistic and deterministic $\eps$-UPHPs, in analogy to the different variants of port-based teleportation, but since every probabilistic processor can be converted into a deterministic one, it is more common to treat the latter. We will do the same here for the sake of simplicity.

\begin{definition}[$\eps$-UPHP] A family of channels $\P\eqdef\{P_{\psi}\}_{\psi}\subset CPTP(\H_M,\H_D\otimes \H_M)$ is a \emph{$d$-dimensional $\eps$-Universal Programmable Hybrid Processor}, if for every unitary $U\in\U(\H_D)$ there exists a unit vector $\ket{\phi_U}\in \mathcal{D}(\H_M)$ such that
\begin{equation}
    {\frac{1}{2}}\norm{\Tr_M\left[\P_\psi(\dyad{\phi_U}_M)\right]-U\dyad{\psi}_DU^\dag}_1\leq\eps\quad\text{for all input }\psi,
\end{equation}
where $\psi$ is the classical description of the $d$-dimensional quantum state $\ket{\psi}_D\in\mathcal{D}(\H_D)$.
\end{definition}
We describe the above accuracy parameter $\eps$ in terms of the {trace norm}, equivalent to the average fidelity, because this one is easier to work with, and we want to draw a parallel with the accuracy parameter in the diamond norm for the original UPQP. However, we believe that the symmetric structure of this task analogous to PBT will allow us to show that the optimal accuracy in terms of diamond norm will be equivalent to the one worst-case fidelity and average fidelity as in~\cite{yang_optimal_2020}.

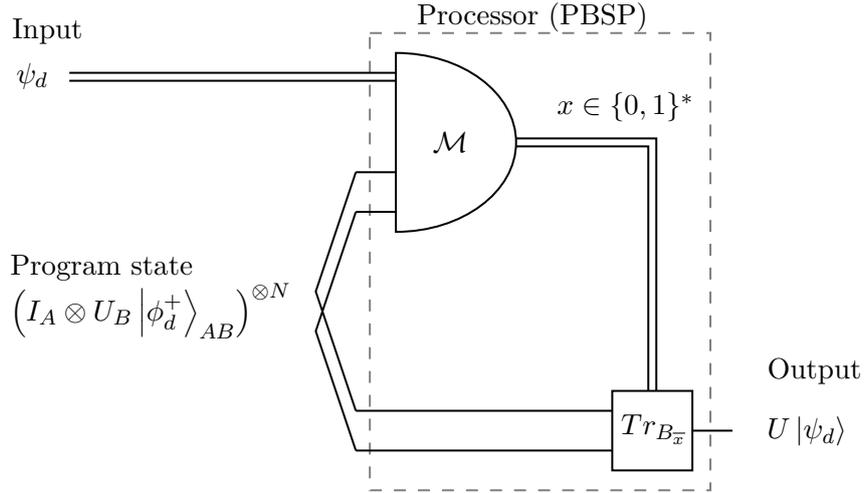
\begin{figure}
    \centering
    \tikzset{every picture/.style={line width=0.75pt}} 

\begin{tikzpicture}[x=0.75pt,y=0.75pt,yscale=-1,xscale=1]

\draw    (290,210) -- (270,150) -- (290,90) ;
\draw    (290,90) -- (310,90) ;
\draw    (290,210) -- (418,210) ;
\draw    (290,229.97) -- (270,169.97) -- (290,109.97) ;
\draw    (290,109.97) -- (310,109.97) ;
\draw    (290,230) -- (418,230) ;
\draw   (310,120) .. controls (310,120) and (310,120) .. (310,120) .. controls (343.14,120) and (370,99.85) .. (370,75) .. controls (370,50.15) and (343.14,30) .. (310,30) -- cycle ;
\draw    (147,40) -- (310,40) ;
\draw    (147,44) -- (310,44) ;
\draw   (370,73) -- (440,73) -- (440,200) ;
\draw   (370,77) -- (436,77) -- (436,200) ;
\draw   (418,200) -- (458,200) -- (458,240) -- (418,240) -- cycle ;
\draw    (458,220) -- (478,220) ;
\draw  [color={rgb, 255:red, 0; green, 0; blue, 0 }  ,draw opacity=0.5 ][dash pattern={on 4.5pt off 4.5pt}] (297,20) -- (467,20) -- (467,250) -- (297,250) -- cycle ;

\draw (327,68.4) node [anchor=north west][inner sep=0.75pt]    {$\mathcal{M}$};
\draw (119,33.4) node [anchor=north west][inner sep=0.75pt]    {$\psi _{d}$};
\draw (421,210.4) node [anchor=north west][inner sep=0.75pt]    {$Tr_{B_{\overline{x}}}$};
\draw (389,48.4) node [anchor=north west][inner sep=0.75pt]    {$x\in \{0,1\}^{*}$};
\draw (489,211.4) node [anchor=north west][inner sep=0.75pt]    {$\ U\ket{\psi _{d}}$};
\draw (117,11) node [anchor=north west][inner sep=0.75pt]   [align=left] {Input};
\draw (319,3) node [anchor=north west][inner sep=0.75pt]   [align=left] {Processor (PBSP)};
\draw (494,182) node [anchor=north west][inner sep=0.75pt]   [align=left] {Output};
\draw (116,130) node [anchor=north west][inner sep=0.75pt]   [align=left] {Program state \ \\ $\displaystyle \left( I_{A} \otimes U_{B}\ket{\phi _{d}^{+}}_{AB}\right)^{\otimes N}$$ $};

\end{tikzpicture}
    \caption{Construction of approximate UPHP from PBSP.}
\end{figure}

We now give an existence result by constructing a $\eps$-UPHP from PBSP. The UPHP constructed has a lower scaling of the memory dimension $m$ in terms of the error $\eps$ and the data dimension $d$, than the optimal scaling for $\eps$-UPQPs. We obtain this by lifting the exponential properties of PBSP, which once again shows that the former classical-quantum task is easier to perform by the processor than the fully quantum one, because we are not constrained by the no-programming theorem anymore.

\begin{theorem}\label{thm:construction_uphp}
	There exists a $d$-dimensional $\eps$-UPHP, $\P\eqdef\{\P_{\psi}\}_{\psi}\subset CPTP(\H_m,\H_ d\otimes \H_m)$, with memory dimension
	\begin{equation}
		{m\leq \left(\frac{1}{\eps}\right)^{4d\ln(d)}.}
	\end{equation}
\end{theorem}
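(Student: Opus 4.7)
The plan is to lift the deterministic PBSP protocol from~\Cref{thm:fid_existence_pbsp} to a UPHP by using, as the program state encoding a unitary $U$, $N$ copies of the Choi state of $U$, and then running PBSP (whose POVM depends on the classical input $\psi$) to prepare $U\ket{\psi}$ on Bob's output port.

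\textbf{Construction.} For $U\in\U(\H_d)$, define the program state
\begin{equation}
    \ket{\phi_U}_M\eqdef\bigl((I_A\otimes U_B)\ket{\phi^+_d}_{AB}\bigr)^{\otimes N}\in \H_M,\qquad \dim\H_M=d^{2N}.
\end{equation}
On classical input $\psi$, the processor $\P_\psi$ applies to $\ket{\phi_U}_M$ the $N$-outcome POVM $\{M'_x(\psi)\}_{x\in[N]}$ of~\Cref{eq:meas_fidelity_pbsp} on the $A$-registers, and upon outcome $x$ traces out $B_{\bar x}$, producing a state on $\H_D$. I claim that with this construction
\begin{equation}
    F\bigl(\Tr_M[\P_\psi(\ketbra{\phi_U})],\,U\ket{\psi}\bigr)\ =\ 1-\Bigl(1-\frac{1}{d}\Bigr)^N.
\end{equation}
The key step is the standard transpose trick $(I\otimes U)\ket{\phi^+_d}=(U^T\otimes I)\ket{\phi^+_d}$, which lets me rewrite Alice's partial measurement against $\ket{\psi^*}$ on one port as a partial measurement against $U^*\ket{\psi^*}=\ket{(U\psi)^*}$ on a maximally entangled resource. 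Hence every computation in the proof of~\Cref{thm:fid_existence_pbsp} goes through verbatim with $\ket{\psi}$ replaced by $U\ket{\psi}$ in the output, giving exactly the same worst-case fidelity.

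\textbf{From fidelity to trace norm and dimension count.} Using Fuchs--van de Graaf~(\Cref{eq:fvdg}) with one pure state,
\begin{equation}
    \frac{1}{2}\norm{\Tr_M[\P_\psi(\ketbra{\phi_U})]-U\ketbra{\psi}U^\dag}_1\ \leq\ \sqrt{1-F}\ =\ \Bigl(1-\frac{1}{d}\Bigr)^{N/2},
\end{equation}
uniformly in $\psi$. To enforce the $\eps$-UPHP requirement it suffices to pick $N$ such that $(1-1/d)^{N/2}\leq\eps$, i.e.\
\begin{equation}
    N\ \geq\ \frac{2\ln(1/\eps)}{\ln\!\bigl(d/(d-1)\bigr)}.
\end{equation}
Using the elementary inequality $\ln(d/(d-1))\geq 1/d$, it is enough to take $N=\lceil 2d\ln(1/\eps)\rceil$, which yields the memory bound
\begin{equation}
    m\ \leq\ d^{2N}\ \leq\ d^{4d\ln(1/\eps)}\ =\ \Bigl(\frac{1}{\eps}\Bigr)^{4d\ln(d)}.
\end{equation}

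\textbf{Main obstacle.} The only conceptual step is verifying that the PBSP POVM designed for target $\ket{\psi}$, when applied to the twisted resource $(I\otimes U)\ket{\phi^+_d}^{\otimes N}$, produces $U\ket{\psi}$ on the output port rather than $\ket{\psi}$; the transpose trick makes this a one-line reduction. Everything else is a routine combination of~\Cref{thm:fid_existence_pbsp}, Fuchs--van de Graaf, and the $\ln(d/(d-1))\geq 1/d$ estimate.
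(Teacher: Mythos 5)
Your proposal is correct and follows essentially the same route as the paper: the program state $\bigl((I\otimes U)\ket{\phi^+_d}\bigr)^{\otimes N}$, the PBSP measurement of \Cref{thm:fid_existence_pbsp} applied on the $A$-registers, the worst-case fidelity $1-(1-1/d)^N$, Fuchs--van de Graaf, and $N\approx 2d\ln(1/\eps)$ giving $m\leq d^{2N}=(1/\eps)^{4d\ln(d)}$ (the paper verifies the twisted-resource output by a direct partial-trace computation rather than your transpose-trick reduction, but this is the same one-line fact, and both treatments share the same harmless glossing over the integrality of $N$).
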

\begin{proof}
 Given a unitary $U\in\U(\H_d)$, we can adapt the PBSP protocol from~\Cref{thm:fid_existence_pbsp} to transmit the state with the unitary applied to all ports instead, so that the $\eps$-UPHP will be a machine simulating the PBSP interaction between Alice and Bob, where Alice's classical PBSP input will be the one given to the processor, and the PBSP resource state shared between Alice and Bob will be used as the program state of the processor. Therefore, by using the Choi state of the unitary as a memory state, i.e.\ by using a PBSP protocol with $N$ copies of the state $(I_A\otimes U_B)\ket{\phi^+_d}_{AB}$ as a resource state, the measurements described by~\Cref{thm:fid_existence_pbsp} will output the unitary applied to the input state $U\dyad{\psi}U^\dag$ with high fidelity.
 	
By the properties of the maximally entangled state, we have that by projecting the unitary~$U$ applied to the maximally entangled state to the conjugate of the desired state $\ket{\psi^*}$, we obtain the unitary applied to the state with the same probability as projecting without the unitary. This holds because
	\begin{equation}\begin{split}
         \Tr_{A_i}\left[(\ketbra{\psi^*}_{A_i}\otimes I_{B_i})(I_{A_i}\otimes U_{B_i})\ketbra{\phi^+_d}_{A_iB_i}(I_{A_i}\otimes U_{B_i}^\dag)\right]&=\frac{1}{d}U\ketbra{\psi}_{B_i}U^\dag\\
         \Tr_{A_i}\left[((I-\ketbra{\psi^*}_{A_i})\otimes I_{B_i})(I_{A_i}\otimes U_{B_i})\ketbra{\phi^+_d}_{A_iB_i}(I_{A_i}\otimes U_{B_i}^\dag)\right]&=\frac{1}{d}U(I-\ketbra{\psi})_{B_i}U^\dag
	\end{split}\end{equation}
	Therefore, when Alice makes the measurements $\{M_x(\psi)\}_{x=1}^N$ as in~\Cref{eq:meas_dpbsp}, the probability of success of port $x\in[N]$ will be the same as the original PBSP~\Cref{eq:prob_meas}, but the outcome state will now have the unitary applied to it
	\begin{multline}
		\rho_{B_x}=\frac{1}{p_x}\left(\frac{1}{d}U\ketbra{\psi}_{B_x}U^\dag\sum_{i=0}^{N-1}\frac{\binom{N-1}{i}}{i+1}\left(\frac{1}{d}\right)^{i}\left(1-\frac{1}{d}\right)^{(N-1)-i}\right. \\
        \left.+\frac{1}{Nd}U(I-\ketbra{\psi})_{B_x}U^\dag\left(1-\frac{1}{d}\right)^{N-1}\right).
	\end{multline}
	Finally, the outcome fidelity between the protocol $\Phi_U(\psi)$ and the unitary applied to the system $U\ket{\psi}$ will be
     \begin{equation}\begin{split}
    		F\left(\Phi_U(\psi),U\ket{\psi}\right)&=\sum_{x=1}^Np_x\bra{\psi}_{B_x}U^\dag\rho_{B_x}U\ket{\psi}_{B_x}=\frac{N}{d}\sum_{i=0}^{N-1}\frac{\binom{N-1}{i}}{i+1}\left(\frac{1}{d}\right)^{i}\left(1-\frac{1}{d}\right)^{(N-1)-i}\\
    		&=1-\left(1-\frac{1}{d}\right)^N.
    	\end{split}
    \end{equation}

    We have thus constructed a processor $\P$ such that for every unitary $U\in\U(\H_D)$, there exists a memory state
    \begin{equation}
        \ket{\phi_U}_M\eqdef (I_A\otimes U_B){\bigotimes_{i=1}^N\ket{\phi^+_d}_{A_iB_i}},
    \end{equation}
    for which 
    \begin{equation}
        F\left(\Tr_M\left[\P_\psi({\ket{\phi_U}_M})\right], U\ket{\psi}_D\right) = F\left(\Phi_U(\psi), U\ket{\psi}\right)
        = 1-\left(1-\frac{1}{d}\right)^N,
    \end{equation}
    for all inputs $\psi$. The dimension of the memory $m$ of the $\eps$-UPHP is the {dimension of the maximally mixed state $d^2$} to the power of the number of ports $N$ needed $m\eqdef d^{2N}$. Therefore, if the number of ports is chosen such that {$N\geq d\ln(1/\eps^2)$}, from the properties of the exponential function and the Fuchs--van de Graaf inequality~\Cref{eq:fvdg}, we obtain
    \begin{equation}
		{\frac{1}{2}\norm{\Phi_U(\psi)-U\ketbra{\psi}U^\dag}_1%
        \leq\sqrt{1-F\left(\Phi_U(\psi), U\ket{\psi}\right)}%
        \leq \left(1-\frac{1}{d}\right)^{d\frac{\ln(1/\eps^2)}{2}}%
        \leq e^{\frac{\ln(\eps^2)}{2}}=\eps}
	\end{equation}
	which means that in order to obtain a processor with accuracy $\eps$, it is enough for the memory to be of dimension 
	\begin{equation}
		{m \leq \left(\frac{1}{\eps}\right)^{4d\ln(d)}.}
	\end{equation}
	
\end{proof}

\section{Achievable optimality of the tasks}\label{sec:optimality}
The existence results of the previous chapters show that having a classical description of a quantum state substantially increases the power of teleportation and processing tasks.
In this section, we complement these with bounds on the optimality achievable by Port-Based State Preparation and Universal Programmable Hybrid Processors.

\subsection{Approximate UPHP}

As we saw previously, an approximate UPHP is a quantum machine that, given a description of a quantum state, outputs a quantum state to which a unitary of choice is applied. In a way, $\eps$-UPHPs are encoding the functionality of the unitary in a quantum memory state, but this should intuitively also involve encoding at least some classical information into this quantum state.

We formalize this intuition in the following theorem, where we show the hardness of $\eps$-UPHP though the hardness of Quantum Random Access Codes.
That is, we show that UPHPs can be used as QRACs, which are limited by Nayak's bound,~\Cref{thm:nayak}.

\begin{theorem}\label{thm:optimal_uphp}
    For any $d$-dimensional $\eps$-UPHP, $\P\eqdef\{\P_{\psi}\}_{\psi}\subset CPTP(\H_m,\H_ d\otimes \H_m)$, the memory dimension must be at least
    \begin{equation}
        m\geq 2^{\frac{d}{2}(1-h(2\eps))}.
    \end{equation}
\end{theorem}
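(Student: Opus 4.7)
The approach is to reduce the $\eps$-UPHP to a Quantum Random Access Code (QRAC) and then invoke Nayak's bound from \Cref{thm:nayak}. Intuitively, although a UPHP is designed to store unitaries, its program register must in particular store the classical information needed to distinguish among a large family of unitaries, and this is precisely what a QRAC quantifies.

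Concretely, I would set $n = \lfloor d/2 \rfloor$ and partition the computational basis of $\H_d$ into $n$ disjoint pairs $\{\ket{2i-1},\ket{2i}\}_{i=1}^{n}$. For each string $x \in \{0,1\}^n$, define the unitary $U_x$ to act on pair $i$ as the swap $\ket{2i-1}\leftrightarrow\ket{2i}$ whenever $x_i=1$ and as the identity whenever $x_i=0$ (and as the identity on the leftover basis vector if $d$ is odd). The QRAC codeword associated with $x$ is the program state $\ket{\phi_{U_x}}$ promised by the $\eps$-UPHP, which lives in the $m$-dimensional memory register.

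To decode bit $x_i$, I would feed the UPHP the classical description of $\ket{2i-1}$ together with the codeword $\ket{\phi_{U_x}}$, discard the memory register, and measure the output data register in the computational basis, outputting $0$ if the outcome is $\ket{2i-1}$ and $1$ if it is $\ket{2i}$. In the ideal case the output equals $\ket{2i-1}$ or $\ket{2i}$ exactly according to $x_i$, so the defining trace-norm accuracy of the $\eps$-UPHP propagates to a lower bound on the measurement success probability via the standard estimate $|\Tr(M\rho)-\Tr(M\sigma)|\le \tfrac{1}{2}\|\rho-\sigma\|_1$ for $0\le M \le I$. The upshot is a $(d/2,\log m,1-2\eps)$-QRAC, and applying \Cref{thm:nayak} together with the symmetry $h(1-2\eps)=h(2\eps)$ yields $\log m \geq (d/2)(1-h(2\eps))$, which is the claimed bound.

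The main obstacle is identifying a unitary family that simultaneously packs $2^{d/2}$ distinguishable codewords into the UPHP's program register \emph{and} allows every bit to be extracted from a single classical input state; the pair-swap family above is the cleanest such choice, and one should verify that the trace-norm error of the UPHP converts into the claimed QRAC success probability without hidden dimension-dependent constants. Beyond this, the argument is a direct instantiation of Nayak's bound, which also explains why the resulting lower bound stays finite as $\eps \to 0$; strengthening it to blow up in $1/\eps$ (as the analogous UPQP bounds do) would require a genuinely sharper argument and is flagged as an open question in the paper.
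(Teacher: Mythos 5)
Your proposal is correct and follows essentially the same route as the paper: reduce the $\eps$-UPHP to a $(d/2,\log m,1-2\eps)$-QRAC and apply Nayak's bound (\Cref{thm:nayak}), with your pair-swap unitaries being the paper's Boolean-function unitaries $U_f\colon\ket{x}\ket{y}\mapsto\ket{x}\ket{y\oplus f(x)}$ up to a relabeling of the computational basis, and the decoding (feed one classical basis-state input, measure the data register, coarse-grain to a two-outcome POVM on the memory) matching the paper's construction. The only cosmetic points are to fix how measurement outcomes outside the designated pair are assigned to a bit value and to note that the trace-norm-to-success-probability conversion in fact gives at least $1-\eps$, which only strengthens the stated $1-2\eps$.
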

\begin{proof}
    Given a $(d,m,\eps)$-UPHP we can construct a $(d/2,\log(m),1-2\eps)$-QRAC. Let us consider an arbitrary $d/2$ bit-string $f\in\{0,1\}^{d/2}$ where each bit $f(x)$ is indexed by the bit-string $x\in\{0,1\}^{\log(d)-1}$. we can equivalently see this as a description of a Boolean function $f:\{0,1\}^{\log(d)-1}\to\{0,1\}$.

    We can compress this Boolean function into a $\log(m)$-dimension Hilbert space using the $(d,m,\eps)$-UPHP. That is, note that every Boolean function $f$ can be computed by a unitary
    \begin{equation}
        U_f\colon \ket{x}_1\otimes\ket{y}_2\mapsto \ket{x}_1\otimes\ket{y\oplus f(x)}_2,\quad\text{with }y\in\{0,1\},
    \end{equation}
    where the unitary associated with the Boolean functions of input size $\log(d)-1$ can be hosted in dimension $d$.

    Now from the $\eps$-UPHP assumption in~\Cref{eq:def_qcupp}, given $U_f\in\U(\H^{d/2})$ there exists a memory state $\ket{\phi_f}\in\H_M$ such that the processor $\P\eqdef\{\P_\psi\}_\psi\subset CPTP(\H_M,\H_M\otimes\H_D)$ approximates the desired unitary
    \begin{equation}
        {\frac{1}{2}\norm{\Tr_M\left[\P_x(\ketbra{\phi_f}_M)\right]-U_f(\ketbra{x}\otimes\ketbra{0})U_f^\dag}_1\leq\eps.}
    \end{equation}

    Now consider the channel consisting of applying $\P_x$, tracing out the parts we do not need, and then applying a measurement in the standard basis. Via a standard argument, we can write this procedure as a single measurement which we call $M_x^0$ and $M_x^1$. The exact form can be derived e.g.\ through writing out the Kraus operators.
    In particular, if define ${K^x_k}$ to be the Kraus operators of the channel $\P$ and subsequent partial trace, we have that $\Tr_{1,M}[\P_x(\cdot)] = \sum_k K^x_k \rho (K^x_k)^\dagger$. Now define the set of $d/2$ measurements $\{M_x^0,M_x^1\}_{x\in\{0,1\}^{\log(d)-1}}$:
    \begin{equation}
        M_x^y \eqdef \sum_k (K^x_k)^\dagger \ketbra{y} K^x_k, \quad\text{for }y\in\{0,1\}.
    \end{equation}
    We can verify that for these measurements it holds that
    \begin{equation}
        \Tr[M_x^y \rho] = \Tr[\sum_k (K^x_k)^\dagger \ketbra{y} K^x_k \rho] = 
        \Tr[\ketbra{y} \sum_k  K^x_k \rho (K^x_k)^\dagger   ] = 
        \Tr[\ketbra{y} \Tr_{1,M} [ \P_x(\rho)]],
    \end{equation}
    as desired.

    We can use these as the set of measurements of a $(d/2,\log(m),1-2\eps)$-QRAC, because the probability of correctly guessing the bit $f(x)$ of the bit-string $f$ is at least
    \begin{equation}\begin{split}
        \Tr[M_x^{f(x)}\ketbra{\phi_f}] &={ \Tr[\ketbra{f(x)}-\ketbra{f(x)}]+\Tr[M_x^{f(x)}\ketbra{\phi_f}]}\\
        &{=1-\Tr[\ketbra{f(x)}-\Tr_1[(I_1\otimes \ketbra{f(x)}_2) \Tr_{M} [ \P_x(\ketbra{\phi_f})]]]}\\
        &\geq1-\norm{\ketbra{f(x)}-\Tr_1[(I_1\otimes \ketbra{f(x)}_2) \Tr_{M} [ \P_x(\ketbra{\phi_f})]]}_1\\
        &\geq1-2\eps,
    \end{split}
    \end{equation}
    where in the last line we used the data processing inequality for the trace distance.

    Finally, from Nayak's bound,~\Cref{thm:nayak}, we know that, even when allowing some error, we cannot compress bits into less qubits, which leads to the desired bound
    \begin{equation}
        \log(m)\geq \frac{d}{2}(1-h(2\eps)),
    \end{equation}
    where $h$ refers to the binary entropy function.
    \end{proof}

    We remark that this bound (which of course also applies to $\eps$-UPQP, besides $\eps$-UPHP) is tighter than the bound in~\cite{kubicki_resource_2019}, that was proven in the context of $\eps$-UPQPs. We also want to point out that we were not able to reproduce the impossibility results blowing-up with the error tending to zero present in the UPQP literature~\cite{hillery_approximate_2006,majenz_entropy_2018,perez-garcia_optimality_2006}. These results seem to rely on the spectral decomposition of the processor, but the fact that we now have infinite copies of the input state makes it impossible to exploit the uniqueness of the spectral decomposition. In fact, it is not even clear to us how to prove the no-programming theorem as in~\cite{nielsen_programmable_1997} for the zero-error UPHP case.

\subsection{PBSP}

In the following, we give two bounds on the limitations of Port-Based State Preparation. The first bound follows from the achievable optimality of Universal Programmable Hybrid Processors, recall that the construction of PBSP from~\Cref{thm:construction_uphp} relied on UPHPs. The second bound follows from a non-signaling argument.

\begin{theorem}\label{thm:optimal_fidelity_pbsp}
	The optimal fidelity achievable by a {$(d,N,F^*)$-Deterministic} Port-Based State Preparation protocol with $N$ independent resource states is upper bounded by
	\begin{equation}
		{1-h(\sqrt{1-F^*})}\leq \frac{4N\log(d)}{d}.
	\end{equation}
\end{theorem}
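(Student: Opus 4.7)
The plan is to chain two results already in hand: the construction of a UPHP from any PBSP protocol in~\Cref{thm:construction_uphp}, and the memory-dimension lower bound for UPHPs in~\Cref{thm:optimal_uphp}. If a PBSP with $N$ ports could achieve worst-case fidelity $F^*$ violating the claim, then the induced UPHP would beat the memory-dimension lower bound, yielding a contradiction.

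Concretely, given a $(d,N,F^*)$-deterministic PBSP, I would run the construction in the proof of~\Cref{thm:construction_uphp} verbatim: the UPHP memory state encoding a unitary $U\in\U(\H_d)$ is $\ket{\phi_U}\eqdef (I_A\otimes U_B)\ket{\phi_d^+}_{AB}^{\otimes N}$, of total dimension $m=d^{2N}$, and the processor simulates Alice's PBSP measurements on the classical input followed by Bob's tracing-out correction. The analysis of~\Cref{thm:construction_uphp} shows that the output has fidelity at least $F^*$ with $U\ket{\psi}$ for every input $\psi$ and every $U$; by Fuchs--van de Graaf (\Cref{eq:fvdg}), the trace-norm accuracy obeys $\eps\leq\sqrt{1-F^*}$. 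Invoking~\Cref{thm:optimal_uphp} gives $d^{2N}=m\geq 2^{(d/2)(1-h(2\eps))}$, which rearranges to $1-h(2\eps)\leq 4N\log(d)/d$.

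To match the exact form of the binary-entropy argument in the claim, I would tighten the last step by re-running the QRAC reduction from the proof of~\Cref{thm:optimal_uphp} directly on the induced UPHP, rather than applying its conclusion as a black box. Feeding the UPHP classical input $\ket{x}\ket{0}$ together with program state $\ket{\phi_{U_f}}$ for the Boolean-function unitary $U_f$ produces an output with fidelity $\geq F^*$ with the pure state $\ket{x}\ket{f(x)}$; since the target is pure, measuring the second register against $\ket{f(x)}$ succeeds with probability $\geq F^*$ directly (no factor-of-two loss from converting fidelity to trace distance). This gives a $(d/2,2N\log d,F^*)$-QRAC, and Nayak's bound (\Cref{thm:nayak}) yields $1-h(F^*)\leq 4N\log(d)/d$; monotonicity of $h$ on $[0,1/2]$ together with $1-F^*\leq\sqrt{1-F^*}$ then recovers the stated form.

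The main delicate point I expect is managing the exact constant inside the binary entropy: the naive Fuchs--van de Graaf step introduces a stray factor of $2$ in the argument of $h$, which is why the cleaner QRAC-reduction route that exploits the pure-target structure of PBSP is preferable. The scaling $4N\log(d)/d$ on the right-hand side, however, is forced by the memory dimension $m=d^{2N}$ of the PBSP-to-UPHP construction matched against the exponent $d/2$ on the Nayak side, and is insensitive to these constants.
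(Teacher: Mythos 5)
Your proposal is correct and follows the paper's own route: plug the $(d,N,F^*)$-PBSP into the UPHP construction of \Cref{thm:construction_uphp} (memory dimension $d^{2N}$) and invoke the QRAC/Nayak lower bound of \Cref{thm:optimal_uphp}. Your refinement of reopening the QRAC reduction to exploit the pure target state (guessing probability $\geq F^*$ rather than $1-2\eps$) is if anything more careful than the paper, which simply substitutes $\eps=\sqrt{1-F^*}/2$ after Fuchs--van de Graaf; the only caveat is that your final monotonicity step converting $1-h(F^*)\leq 4N\log(d)/d$ into the stated $1-h(\sqrt{1-F^*})\leq 4N\log(d)/d$ requires $\sqrt{1-F^*}\leq 1/2$, i.e.\ it covers the regime $F^*\geq 3/4$ in which the bound is of interest.
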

\begin{proof}
Recall that we saw in~\Cref{thm:construction_uphp} how to use a {$(d,N,F=1-\eps^2)$}-PBSP to construct a $d$-dimensional $\eps$-UPHP with memory dimension $m=d^{2N}$.
Moreover, by~\Cref{thm:optimal_uphp} the dimension of the program register of $d$-dimensional $\eps$-UPHP is lower bounded by
\begin{equation}
    d^{2N}\geq 2^{\frac{d}{2}(1-h(2\eps))}.
\end{equation}
Substituting $\eps=(\sqrt{1-F})/2$, we obtain the desired bound
\begin{equation}
    {1-h\left(\sqrt{1-F}\right)\leq \frac{4N\log(d)}{d}.}
\end{equation}
\end{proof}

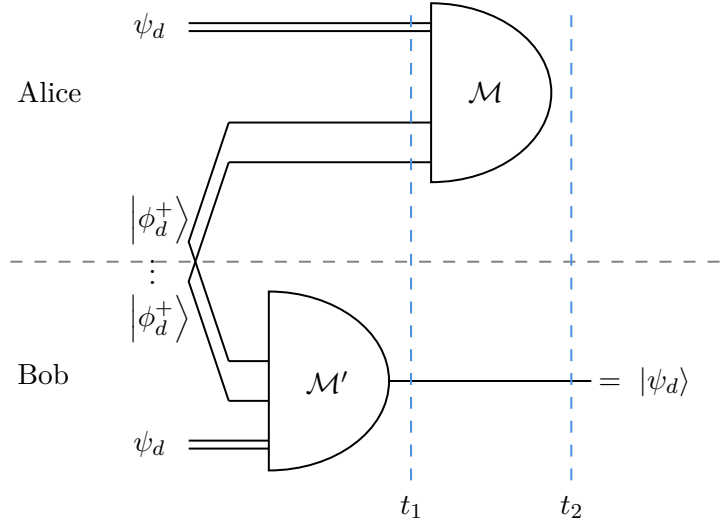
\begin{figure}
    \centering
    \tikzset{every picture/.style={line width=0.75pt}} 

\begin{tikzpicture}[x=0.75pt,y=0.75pt,yscale=-1,xscale=1]

\draw    (109,180) -- (89,120) -- (109,60) ;
\draw    (109,60) -- (210,60) ;
\draw    (109,180) -- (129,180) ;
\draw    (109,199.97) -- (89,139.97) -- (109,79.97) ;
\draw    (109,79.97) -- (210,79.97) ;
\draw    (109,200) -- (129,200) ;
\draw   (210,90) .. controls (210,90) and (210,90) .. (210,90) .. controls (243.14,90) and (270,69.85) .. (270,45) .. controls (270,20.15) and (243.14,0) .. (210,0) -- cycle ;
\draw    (89,10) -- (210,10) ;
\draw    (89,14) -- (210,14) ;
\draw    (189,190) -- (290,190) ;
\draw [color={rgb, 255:red, 0; green, 0; blue, 0 }  ,draw opacity=0.5 ] [dash pattern={on 4.5pt off 4.5pt}]  (0,130) -- (360,130) ;
\draw [color={rgb, 255:red, 74; green, 144; blue, 226 }  ,draw opacity=1 ] [dash pattern={on 4.5pt off 4.5pt}]  (200,240) -- (200,0) ;
\draw   (129,235) .. controls (129,235) and (129,235) .. (129,235) .. controls (162.14,235) and (189,214.85) .. (189,190) .. controls (189,165.15) and (162.14,145) .. (129,145) -- cycle ;
\draw [color={rgb, 255:red, 74; green, 144; blue, 226 }  ,draw opacity=1 ] [dash pattern={on 4.5pt off 4.5pt}]  (280,240) -- (280,0) ;
\draw    (89,220) -- (129,220) ;
\draw    (89,224) -- (129,224) ;

\draw (51,95.4) node [anchor=north west][inner sep=0.75pt]    {$ \begin{array}{l}
\ket{\phi _{d}^{+}}\\
\ \ \vdots \\
\ket{\phi _{d}^{+}}
\end{array}$};
\draw (227,38.4) node [anchor=north west][inner sep=0.75pt]    {$\mathcal{M}$};
\draw (60,3.4) node [anchor=north west][inner sep=0.75pt]    {$\psi _{d}$};
\draw (292,181.4) node [anchor=north west][inner sep=0.75pt]    {$=\ \ket{\psi _{d}}$};
\draw (2,38) node [anchor=north west][inner sep=0.75pt]   [align=left] {Alice};
\draw (2,180) node [anchor=north west][inner sep=0.75pt]   [align=left] {Bob};
\draw (146,183.4) node [anchor=north west][inner sep=0.75pt]    {$\mathcal{M} '$};
\draw (193,246.4) node [anchor=north west][inner sep=0.75pt]    {$t_{1}$};
\draw (273,246.4) node [anchor=north west][inner sep=0.75pt]    {$t_{2}$};
\draw (60,213.4) node [anchor=north west][inner sep=0.75pt]    {$\psi _{d}$};

\end{tikzpicture}
    \caption{Non-signalling times $t_1$ before Alice's measurement and $t_2$ after Alice's measurement.}\label{fig:non-signalling}
\end{figure}

For the non-signaling argument, the idea is for Bob to make a measurement on his half of the shared resources, and to compare the outcome before and after Alice makes a measurement, without any communication. Note that the following bounds explicitly use the fact that the resource states are maximally entangled states.

\begin{theorem}\label{thm:optimal_prob_pbsp}
    The optimal probability achievable by a $(d,N,p^*)$-Probabilistic Port-Based State Preparation protocol with $N$ maximally entangled resource states is upper bounded by
	\begin{equation}
		{p_\mathrm{EPR}^*}\leq 1-\left(1-\frac{1}{d}\right)^N.
	\end{equation}
\end{theorem}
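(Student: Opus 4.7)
The plan is to exploit a non-signalling argument on Bob's side, as depicted in~\Cref{fig:non-signalling}. Since Alice and Bob share $N$ independent maximally entangled qudits, Bob's reduced density matrix is $(I/d)^{\otimes N}$, and by no-signalling this remains the marginal even after Alice performs her POVM and we average over the outcomes. I would introduce the Hermitian observable on Bob's $N$ ports
\begin{equation*}
    O_B \eqdef I - \bigl(I - \ketbra{\psi}\bigr)^{\otimes N},
\end{equation*}
which is the projector onto the event ``at least one of Bob's ports is in the state $\ket{\psi}$''. Evaluated against the maximally mixed marginal, this gives $\Tr[O_B\,\rho_B]=1-\left(1-\tfrac{1}{d}\right)^N$ by independence of the ports.

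Next, I would decompose $\rho_B = \sum_{x=0}^N p_x\,\rho_B^{(x)}$ according to Alice's measurement outcome~$x$, where $p_x$ is the probability of outcome $x$ and $\rho_B^{(x)}$ is Bob's conditional state. By the definition of a $(d,N,p)$ probabilistic PBSP, for each successful outcome $x\in[N]$ the reduced state $\Tr_{B_{\bar{x}}}\rho_B^{(x)}$ equals $\ketbra{\psi}$. Combining the operator inequality $(I-\ketbra{\psi})^{\otimes N} \leq I_{B_{\bar{x}}}\otimes (I-\ketbra{\psi})_{B_x}$ with the pure-state marginal on port $B_x$ then forces $\Tr\bigl[(I-\ketbra{\psi})^{\otimes N}\rho_B^{(x)}\bigr]\leq \bra{\psi}(I-\ketbra{\psi})\ket{\psi}= 0$, so $\Tr[O_B\,\rho_B^{(x)}]=1$ for every $x\in[N]$.

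Putting the two computations together by linearity yields
\begin{equation*}
    1-\left(1-\tfrac{1}{d}\right)^N \;=\; \sum_{x=0}^N p_x\,\Tr[O_B\,\rho_B^{(x)}] \;\geq\; \sum_{x=1}^N p_x \;=\; p,
\end{equation*}
which gives the advertised bound on $p^*_\mathrm{EPR}$. The only delicate step is the non-signalling equality, but it is essentially free here: Bob's marginal is pinned to $(I/d)^{\otimes N}$ by the structure of the maximally entangled resource, independent of which POVM Alice applies on her share. I expect the main conceptual obstacle to lie not in this proof but in extending the argument beyond maximally entangled resources, since the product-of-EPR structure is exactly what makes $\Tr[O_B\,\rho_B]$ factorise over the ports and give the exponent $N$.
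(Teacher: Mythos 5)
Your proof is correct and takes essentially the same route as the paper's: the paper also fixes the observable $M_\top = I-\bigotimes_{i=1}^N(I-\ketbra{\psi})_{B_i}$ on Bob's side, evaluates it on his marginal before Alice's measurement to get $1-\left(1-\frac{1}{d}\right)^N$, and uses non-signalling together with the exact-success condition of probabilistic PBSP to lower bound the same quantity by $p^*_\mathrm{EPR}$. Your explicit decomposition into conditional states and the operator inequality $(I-\ketbra{\psi})^{\otimes N}\leq I_{B_{\bar{x}}}\otimes(I-\ketbra{\psi})_{B_x}$ merely spell out the step the paper asserts directly.
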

\begin{proof}
    Imagine that Alice wishes to execute a PBSP protocol to transport {an arbitrary} $d$-dimensional state $\ket{\psi}$, that for the sake of the argument is also known to Bob.

    Imagine that Bob measures all his halfs of the $N$ maximally mixed states to see if they are in the state $\ketbra{\psi}$ or $I-\ketbra{\psi}$, for example by using the POVM $\mathcal{M}'=\{M_\top,M_\bot\}$, with
    \begin{equation}\label{eq:bob_bot}
        M_\bot(\psi) = {\bigotimes_{i=1}^N}(I-\ketbra{\psi})_{B_i}\quad\text{and}\quad M_\top = I-M_\bot.
    \end{equation}
    
    We will compare Bob's outcome before and after Alice does any measurement. Before Alice does any measurement, see time $t_1$ in~\Cref{fig:non-signalling}, observe because $\ket{\psi}$ is random, any specific port will have probability $\frac{1}{d}$ to see $\ket{\psi}$. Therefore, because the resource state is a product state of identical ports, the probability of seeing $\ket{\psi}$ in \emph{some} port is
    \begin{equation}\begin{split}
        p_\mathrm{success}^{t_1} &= 1-\Tr[(I_A\otimes M_\bot){\bigotimes_{i=1}^N}\ketbra{\phi_d^+}_{A_iB_i}] \\
        & = 1-{\prod_{i=1}^N}\Tr[(I_{A_i}\otimes (I-\ketbra{\psi})_{B_i})\ketbra{\phi_d^+}_{A_iB_i}]\\
        & = 1-\left(1-\frac{1}{d}\right)^N.
    \end{split}
    \end{equation}

    Let $\mathcal{M}=\{M_0,M_1,\ldots,M_N\}$ be Alice's POVM, outcome $x\in[N]$ refers to the state being in port $B_x$, and outcome $x=0$ refers to the protocol failing. After Alice does her measurement, see time $t_2$ in~\Cref{fig:non-signalling}, if the probabilistic PBSP went correctly, {a port $B_x$ would exist containing $\ket{\psi}$}. Therefore, the probability that Bob sees any port with the teleported state has to be at least $p^*_\mathrm{EPR}$ by correctness of the protocol. Formally, the probability of Bob seeing $\ket{\psi}$ in \emph{some} port is
    \begin{equation}\begin{split}
        p_\mathrm{success}^{t_2} & = \Tr[\left(I_A\otimes M_\top\right)\bigotimes_{i=1}^N\ketbra{\phi_d^+}_{A_iB_i}]\\
        & = \Tr[\left(M_0\otimes M_\top\right)\bigotimes_{i=1}^N\ketbra{\phi_d^+}_{A_iB_i}] + \Tr[\left(\sum_{x=1}^N M_x\otimes M_\top\right)\bigotimes_{i=1}^N\ketbra{\phi_d^+}_{A_iB_i}]\\ 
        & = \Tr[\left(M_0\otimes M_\top\right)\bigotimes_{i=1}^N\ketbra{\phi_d^+}_{A_iB_i}]+p^*_\mathrm{EPR}.
    \end{split}\end{equation}
    
    By non-signalling, the probabilities of some event happening on Bob's side (if no communication is involved) is identical before and after Alice's measurement. So by equating the probabilities of the event at time $t_1$ and $t_2$ we obtain the desired bound
    \begin{equation}
        1-\left(1-\frac{1}{d}\right)^N =p_\mathrm{success}^{t_1} = p_\mathrm{success}^{t_2}\geq p^*_\mathrm{EPR}.
    \end{equation}
\end{proof}

The same argument can be applied for deterministic PBSP, where the analogous non-signalling theorem for fidelity tells us that the entanglement fidelity of a protocol cannot increase if no communication happens between the parties. We will only include a sketch of the proof.

\begin{theorem}\label{thm:optimal_fidelity_pbsp_2}
    The optimal fidelity achievable by a $(d,N,F^*)$-Deterministic Port-Based State Preparation protocol with $N$ maximally entangled resource states is upper bounded by
	\begin{equation}
		{F_\mathrm{EPR}^*}\leq 1-\left(1-\frac{1}{d}\right)^N.
	\end{equation}
\end{theorem}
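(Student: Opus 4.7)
The plan is to mimic the non-signalling argument of \Cref{thm:optimal_prob_pbsp}, using fidelity in place of success probability. Fix an arbitrary input $\ket{\psi}$, let Bob use the same POVM $\{M_\top, M_\bot\}$ from \Cref{eq:bob_bot}, and let the resource be $\rho_{AB} \eqdef \bigotimes_{i=1}^N \ketbra{\phi_d^+}_{A_iB_i}$. At time $t_1$, before Alice's measurement, Bob's marginal is the maximally mixed state on $d^N$ dimensions, and the same tensor product computation as in the probabilistic case yields $p^{t_1}_{\mathrm{success}} = 1-(1-1/d)^N$.

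At time $t_2$, after Alice applies her $N$-outcome POVM $\{M_x\}_{x=1}^N$, the probability Bob obtains outcome $M_\top$ becomes $p^{t_2}_{\mathrm{success}} = \sum_{x=1}^N \Tr[(M_x\otimes M_\top)\rho_{AB}]$. The key step is the operator inequality
\begin{equation}
    M_\top = I - \bigotimes_{i=1}^N (I-\ketbra{\psi})_{B_i} \;\geq\; \ketbra{\psi}_{B_x}\otimes I_{B_{\bar x}} \quad\text{for every } x\in[N],
\end{equation}
which follows directly from $\bigotimes_{i\neq x}(I-\ketbra{\psi})_{B_i} \leq I_{B_{\bar x}}$. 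Combined with $M_x\geq 0$, this gives
\begin{equation}
    p^{t_2}_{\mathrm{success}} \;\geq\; \sum_{x=1}^N \Tr\!\bigl[\bigl(M_x\otimes\ketbra{\psi}_{B_x}\otimes I_{B_{\bar x}}\bigr)\rho_{AB}\bigr] \;=\; F\bigl(\Phi(\psi),\ket{\psi}\bigr),
\end{equation}
since the right-hand side is exactly the overlap of the PBSP output channel $\Phi$ with $\ket{\psi}$, obtained by summing the contribution of each Alice outcome on its corresponding port.

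Finally, non-signalling forces $p^{t_1}_{\mathrm{success}} = p^{t_2}_{\mathrm{success}}$, so $F(\Phi(\psi),\ket{\psi}) \leq 1-(1-1/d)^N$ for every input $\ket{\psi}$, and taking the infimum over $\ket{\psi}$ yields the worst-case fidelity bound. The only genuinely new ingredient beyond the probabilistic proof is the operator inequality above, which is the piece that translates Bob's coarse ``some port contains $\ket{\psi}$'' event into a lower bound on the per-port fidelity; once it is in place, the rest is verbatim the non-signalling argument already used in \Cref{thm:optimal_prob_pbsp}.
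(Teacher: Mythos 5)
Your proof is correct, and it follows the paper's overall strategy: the same Bob-side POVM $\{M_\top,M_\bot\}$ from \Cref{eq:bob_bot}, and a non-signalling comparison of Bob's statistics before and after Alice's $N$-outcome measurement (for a deterministic protocol $\sum_x M_x = I_A$, so $p^{t_1}_{\mathrm{success}}=p^{t_2}_{\mathrm{success}}$ is immediate, exactly as in \Cref{thm:optimal_prob_pbsp}). Where you genuinely differ is in how the event ``$\ket{\psi}$ is found in some port'' is tied to the protocol's fidelity. The paper's (explicitly sketched) argument works with the failure event: it bounds $p^{t_1}_{\mathrm{failure}}$ by a minimum over single-port marginals, inserts Alice's POVM via non-signalling, and then passes from that minimum to the diagonal sum $\sum_x \Tr\bigl[(I-\ketbra{\psi})_{B_x}\Tr_{AB_{\bar{x}}}[(M_x(\psi)\otimes I_B)\bigotimes_i\ketbra{\phi^+_d}_{A_iB_i}]\bigr]=1-F$, a step that as literally written is the loose part of the sketch. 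You instead use the termwise operator inequality $M_\top\geq\ketbra{\psi}_{B_x}\otimes I_{B_{\bar{x}}}$ (which is valid, since $(I-\ketbra{\psi})_{B_x}\otimes\bigl(I_{B_{\bar{x}}}-\bigotimes_{i\neq x}(I-\ketbra{\psi})_{B_i}\bigr)\geq 0$), and pair it with $M_x\geq 0$ to lower-bound each joint term $\Tr[(M_x\otimes M_\top)\rho_{AB}]$ by the port-$x$ contribution to the fidelity, giving $p^{t_2}_{\mathrm{success}}\geq F(\Phi(\psi),\ket{\psi})$ directly. This yields a fully rigorous, self-contained chain $F(\Phi(\psi),\ket{\psi})\leq p^{t_2}_{\mathrm{success}}=p^{t_1}_{\mathrm{success}}=1-(1-1/d)^N$, valid for every input $\psi$ and every Alice POVM, which immediately gives the worst-case statement; the paper's route targets the same quantity $1-F$ but through a per-port minimization that requires more care to justify. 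In short: same non-signalling skeleton, but your key lemma is different and arguably the cleaner way to make the deterministic case airtight.
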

\begin{proof}
    We will compare Bob's outcome when performing the measurement~\Cref{eq:bob_bot} before and after Alice does the optimal measurement. Before Alice performs any measurement, see time $t_1$ in~\Cref{fig:non-signalling}, the probability of seeing $\ket{\psi}$ in some port is
    \begin{equation}
        p_{\mathrm{success}}^{t_1} = 1-\left(1-\frac{1}{d}\right)^N. 
    \end{equation}
    We can bound the probability of projecting into a tensor product $\bigotimes_{i=1}^N(I-\ketbra{\psi})_{B_i}$ by the minimum of all the marginal projections, which gives us an upper bound for the probability of failure in terms of individual registers
    \begin{equation}\begin{split}
        p^{t_1}_{\mathrm{failure}} &= 1- p^{t_1}_\mathrm{success}\\
        & = \Tr[\left(I_A\otimes\bigotimes_{i=1}^N(I-\ketbra{\psi})_{B_i}\right)\bigotimes_{i=1}^N\ketbra{\phi_d^+}_{A_iB_i}]\\
        & \leq \min_{\substack{x\in[N]}} \Tr[\left(I_{AB_{\bar{x}}}\otimes\left(I-\ketbra{\psi}\right)_{B_x}\right)\bigotimes_{i=1}^N\ketbra{\phi_d^+}_{A_iB_i}]\\
        & = \min_{\substack{x\in[N]}}(I-\ketbra{\psi})_{B_x}\Tr_{AB_{\bar{x}}}\left[\bigotimes_{i=1}^N\ketbra{\phi_d^+}_{A_iB_i}\right]
    \end{split}\end{equation}
    Let $\mathcal{M}=\{M_1,\ldots,M_N\}$ be Alice's POVM, outcome $x\in[N]$ refers to the state being in port $B_x$. By non-signalling a measurement on Alice's side without communicating the outcomes to Bob cannot change his perspective of the state, this is, for every $x\in[N]$:
    \begin{equation}
        \sum_{i\in[N]}\Tr_{AB_{\bar{x}}}\left[(M_i(\psi)\otimes I_B)\bigotimes_{i=1}^N\ketbra{\phi_d^+}_{A_iB_i}\right]= \Tr_{AB_{\bar{x}}}\left[\bigotimes_{i=1}^N\ketbra{\phi_d^+}_{A_iB_i}\right].
    \end{equation}
    Therefore we can rewrite the upper bound on the probability of failure as
    \begin{equation}\begin{split}
        p^{t_1}_{\mathrm{failure}}&=  \min_{\substack{x\in[N]}} \sum_{i\in[N]}(I-\ketbra{\psi})_{B_x}\Tr_{AB_{\bar{x}}}\left[(M_i(\psi)\otimes I_B)\bigotimes_{i=1}^N\ketbra{\phi_d^+}_{A_iB_i}\right]\\
        &\leq \sum_{x\in[N]}(I-\ketbra{\psi})_{B_x}\Tr_{AB_{\bar{x}}}\left[(M_x(\psi)\otimes I_B)\bigotimes_{i=1}^N\ketbra{\phi_d^+}_{A_iB_i}\right].
    \end{split}\end{equation}
    Note that by the correctness of the protocol, this value is related to the fidelity in the desired way
    \begin{equation}\begin{split}
        1 - F^*_\mathrm{EPR} & = 1- \sum_{x\in[N]}\bra{\psi}_{B_x}\Tr_{AB_{\bar{x}}}\left[(M_x(\psi)\otimes I_B)\bigotimes_{i=1}^N\ketbra{\phi_d^+}_{A_iB_i}\right]\ket{\psi}_{B_x}\\
        & = \sum_{x\in[N]} p_x - \Tr[\ketbra{\psi}_{B_x}\Tr_{AB_{\bar{x}}}\left[(M_x(\psi)\otimes I_B)\bigotimes_{i=1}^N\ketbra{\phi_d^+}_{A_iB_i}\right]]\\
        & =\sum_{x\in[N]} \Tr[(I-\ketbra{\psi})_{B_x}\Tr_{AB_{\bar{x}}}\left[\left(M_x(\psi)\otimes I_B\right)\bigotimes_{i=1}^N\ketbra{\phi_d^+}_{A_iB_i}\right]].
    \end{split}\end{equation}
\end{proof}

\section*{Acknowledgement}

We thank Marco Túlio Quintino for his help navigating the PBT literature and introducing us to high-order physics, and Jonas Helsen for valuable discussions. FS was supported by the Dutch Ministry of Economic Affairs and Climate Policy (EZK), as part of the Quantum Delta NL programme.

\printbibliography

\end{document}